\newcommand{\rnc}{\renewcommand}
\newcommand{\nc}{\newcommand}
\newcommand{\mrm}{\mathrm}
\nc{\mb}{\mathbb}
\nc{\mc}{\mathcal}
\nc{\E}{\mb{E}}
\nc{\N}{\mb{N}}
\nc{\R}{\mb{R}}
\nc{\Q}{\mb{Q}}
\rnc{\P}{\mrm P}
\rnc{\d}{\mrm d}
\nc{\C}{\mc{C}}
\nc{\D}{\mc{D}}
\nc{\B}{\mc{B}}
\nc{\vbeta}{\bm \beta}
\nc{\vtheta}{\bm \theta}
\nc{\vX}{\bm X}
\nc{\vy}{\bm y}
\nc{\vU}{\bm U}
\nc{\vI}{\bm I}
\nc{\vE}{\bm E}
\nc{\ve}{\bm e}
\nc{\vV}{\bm V}
\nc{\vv}{\bm v}
\nc{\vS}{\bm S}
\nc{\vSigma}{\bm \Sigma}
\nc{\oPo}{\stackrel{\mrm p}{\rightarrow}}
\nc{\oWo}{\stackrel{w}{\rightarrow}}
\nc{\oDo}{\stackrel{d}{\longrightarrow}}
\nc{\eff}{\|F\|}
\definecolor{darkgreen}{rgb}{0.2, 0.5, 0.33}
\nc{\dd}{\color{darkgreen}}
\nc{\mbf}{\boldsymbol}
\nc{\trans}{'}
\def\E{{ E }}
\def\R{{ \mathbb{R} }}
\def\N{{ \mathbb{N} }}
\def\P{ P }
\def\E{ E }
\def\mbf{\boldsymbol}
\newtheorem{theorem}{Theorem}
\newtheorem{corollary}{Corollary}
\newtheorem{prop}{Proposition}
\newcommand\blfootnote[1]{%
  \begingroup
  \renewcommand\thefootnote{}\footnote{#1}%
  \addtocounter{footnote}{-1}%
  \endgroup
} % Fuer Fussnoten ohne Nummer
\newcolumntype{H}{>{\setbox0=\hbox\bgroup}c<{\egroup}@{}}  % H is now a new column type and can be used instead of l,r,c. When H is used, the corresponding col is hidden
\newcolumntype{Z}{>{\setbox0=\hbox\bgroup}c<{\egroup}@{\hspace*{-\tabcolsep}}}
\begin{document}

\title{\Large \bf Studentized Permutation Method for Comparing Restricted Mean Survival Times with Small Sample from Randomized Trials 
 }
\author[1,$*$]{Marc Ditzhaus}
\author[2]{Menggang Yu}
\author[3]{Jin Xu}

\affil[1]{Department of Statistics, TU Dortmund University, Germany.}
\affil[2]{Department of Biostatistics and Medical Informatics, University of Wisconsin, Madison WI, USA}
\affil[3]{Key Laboratory of Advanced Theory and Application in Statistics and Data Science - MOE and School of Statistics, East China Normal University, China}
\maketitle

\begin{abstract}
	Recent observations, especially in cancer immunotherapy clinical trials with time-to-event outcomes, show that the commonly used proportional hazard assumption is often not justifiable, hampering an appropriate analyse of the data by hazard ratios. An attractive alternative advocated is given by the restricted mean survival time (RMST), which does not rely on any model assumption and can always be interpreted intuitively. As pointed out recently by \cite{horiguchi2020permutation}, methods for the RMST based on asymptotic theory suffer from inflated type-I error under small sample sizes. To overcome this problem, they suggested a permutation strategy leading to more convincing results in simulations. However, their proposal requires an exchangeable data set-up between comparison groups which may be limiting in practice. In addition, it is not possible to invert their testing procedure to obtain valid confidence intervals, which can provide more in-depth information. In this paper, we address these limitations by proposing a studentized permutation test as well as the corresponding permutation-based confidence intervals. In our extensive simulation study, we demonstrate the advantage of our new method, especially in situations with relative small sample sizes and unbalanced groups. Finally we illustrate the application of the proposed method by re-analysing data from a recent lung cancer clinical trial.
\blfootnote{${}^*$ e-mail: marc.ditzhaus@tu-dortmund.de}
\end{abstract}

\noindent{\bf Keywords:} hazard ratio, permutation methods, restricted mean survival time, survival analysis, time-to-event outcomes.

%\vfill
%\vfill

\section{Introduction}

While the log-rank test and hazard ratios were the gold standard in time-to-event analysis for a long time, there is a recent trend towards alternative methods not relying on the proportional hazard assumption. The reason for this change are recently observed violations of the proportional hazard assumption in real data. For example, \cite{trinquart2016comparison} analysed 54 phase III oncology clinical trials from five leading journals and in 13 ($24\%$) of them the proportional hazard assumption could be rejected significantly. Especially in immunotherapy trials, a delayed treatment effect often lead to a violation of the proportional hazard assumption \citep{mick2015statistical,alexander2018hazards} and suchlike could also be observed when comparing bone marrow transplant and chemotherapy for hematologic malignancies \citep{zittoun1995autologous,scott2017myeloablative}. More classical and known effect sizes as landmark survival \citep{taori2009landmark} and the median survival time \citep{brookmeyer1982confidence,chen2016comparing,ditzhaus2020inferring} provide rather a snapshot for a time point than information about the complete Kaplan--Meier curves. 

This may be one of the reasons why the restricted mean survival time (RMST)
\citep{royston2011use,royston2013restricted,uno2014moving,a2016restricted,ZhaoETAL2016,kim2017restricted}, the integral of the
Kaplan--Meier-curve over a clinically relevant time window, gets more and more attention lately. Since this summary measure is ``arguably more helpful for clinical decision-making and more easily understood by patients'' \citep{stensrud2020test}, methods based on it ``should
be routinely reported in randomized trials with time-to-event outcomes'' \citep{trinquart2016comparison}.

Recently, \cite{horiguchi2020permutation} pointed out that ``there is a notable inflation of the type-I error rate'' under small sample sizes when the asymptotic methods are used for two-sample comparisons in terms of the RMST. To overcome this problem, they suggested a permutation approach showing a significant improvement regarding the type-I error control in simulations. For small sample sizes, permutation methods are popular tools since they guarantee exact testing procedures \citep{lehmann2006testing,hemerik2018} for exchangeable data. However, the assumption of exchangeability reduces the applicability notably, especially in the context of survival data due to the following reasons: (i) the censoring distributions may differ between comparison groups, e.g. due to side effects or other dropout reasons; (ii) the survival distributions may differ even when the RMSTs of the comparison groups may coincide under the null; and (iii) confidence intervals for the quantity of interest, here the difference of the RMSTs, cannot be derived as \cite{horiguchi2020permutation} mentioned: ``Further research to develop methods for constructing confidence intervals for RMST
difference with a small sample data is warranted''.

In this paper, we propose a studentized permutation method that does not require the exchangeability assumption, thus overcoming these limitations. Our method builds on existing work for two-sample comparisons \citep{neuhaus:1993,janssen:1997studentized,chungRomano2013,dopa2018,ditzhaus2020more,ditzhaus2020bootstrap}, which was recently extended to one-way layouts \citep{chungRomano2013,pauly2020asymptotic} and even to general factorial designs \citep{paulyETAL2015,smaga2017diagonal,berrett2020conditional,ditzhaus2019qanova}. 
While the finite exactness of permutation tests is preserved, studentized permutation tests are often shown to be still asymptotically valid for non-exchangeable data. Moreover, they exhibit a satisfactory performance under small sample sizes in simulations. The aim of this paper is to justify theoretically and in an extensive simulation study that the concept of studentization can also be used to extend the test of \citet{horiguchi2020permutation} to non-exchangeable data settings and, furthermore, to derive confidence intervals. For that purpose, we employ the empirical process theory \citep{vaartWellner1996} which can easily handle even tied data, e.g. survival times rounded to days, months etc.

The paper is organized as follows. First, our methodology is presented in Section~\ref{sec:setup}. Therein, we explain in detail why the permutation test of \cite{horiguchi2020permutation} may fail for general non-exchangeable data and how studentization solves for that. Moreover, permutation-based confidence intervals for the difference and the ratio of RMSTs are presented. To empirically assess the performances of the proposed test and confidence interval, we conducted an extensive simulation study comparing the asymptotic and the two permutation methods in Section~\ref{sec:sim}. Their applications are illustrated by analyzing data from a recent lung cancer trial in Section~\ref{sec:data_example}. Finally, we give some final remarks and discuss possible future extensions in Section~\ref{sec:discussion}. All proofs and some additional simulation results are given in the supplement.

\section{Methodology}
\label{sec:setup}
We consider the two-sample survival set-up given by mutually independent survival and censoring times
\begin{align*}%\label{eqn:model}
T_{ij}\sim S_i,\quad C_{ij}\sim G_{i}, \quad i=1,2; \ \  j=1,\ldots,n_i,
\end{align*}
respectively. Here, $S_i$ and $G_i$ denote the survival functions for the survival and censoring times of the $i$th group, respectively. Both are not necessarily continuous and ties in the data are explicitly allowed, e.g. survival times rounded to days, months etc. Based on the right-censored event times $X_{ij}=\min(T_{ij},C_{ij})$ and the censoring statuses $\delta_{ij}=\mathbf{1}\{X_{ij}=T_{ij}\}$, we would like to infer differences between the two groups in terms of their RMSTs
\begin{align*}%\label{eqn:def_quantiles}
\mu_i= \int_0^\tau S_i(t) \,\mathrm{ d }t \quad (i=1,2)
\end{align*}
over a pre-specified time window $[0,\tau]$, which is practically relevant (e.g. $\tau=2$ years). Thereby, it needs to be guaranteed that the event times $X_{ij}$ larger than $\tau$ are observable with a positive probability $P(X_{ij}\geq \tau)>0$. In practice, a typical choice for $\tau$ is the end-of-study time. While $\tau$ is usually be chosen as a pre-specified constant allowing a straight-forward interpretation of $\mu_i$, \cite{tian2020empirical} discuss an empirical choice of $\tau$, e.g. the largest observed time, under appropriate regularity assumptions on the censoring distribution.

The RMST can be naturally estimated by plugging-in the Kaplan-Meier estimator $\widehat S_i$:
\begin{align*}
\widehat \mu_i = \int_0^\tau \widehat S_i(t) \,\mathrm{ d } t \quad (i=1,2).
\end{align*}
Asymptotic inference for this estimator relies on a normal approximation, which can be justified by martingale arguments \citep{abgk} combined with the continuous mapping theorem. In fact, under the assumption of non-vanishing groups, i.e. $n_i/n\to \kappa_i\in(0,1)$ as $n\to\infty$, which is supposed throughout the paper, we obtain
\begin{equation}\label{eqn:asym_diff}
\sqrt{n} \left\{ (\widehat \mu_1 - \widehat \mu_2) - (\mu_1 - \mu_2) \right\} \stackrel{d}{\rightarrow} Z \sim N(0,\sigma^2), \quad \sigma^2 = \sigma_1^2 + \sigma_2^2.
\end{equation}
Here, $\sigma_i^2$ denotes the asymptotic variance of $\sqrt{n}(\widehat \mu_i - \mu_i)$ and is given by
\begin{align*}
\sigma^2_i = \kappa_i^{-1}\int_0^\tau \left\{\int_x^\tau S_i(t) \,\mathrm{ d }t\right\}^2 \frac{1}{\{1-\Delta A_i(x)\}G_{i-}(x)S_{i-}(x)} \;\mathrm{ d }A_i(x) \quad (i=1,2),\nonumber
\end{align*}
where $A_i=-\log(S_i)$ is the cumulative hazard rate function and $\Delta A_i(x) = A_i(x) - A_{i-}(x)$ is its increment in $t$. Moreover, $G_{i-}$, $S_{i-}$ and $A_{i-}$ denote the left-continuous versions of $G_i$, $S_i$ and $A_i$, respectively, e.g., $G_{i-}(t)=\P(C_{i1}\geq t)$ (c.f. $G_i(t)=\P(C_{i1}>t)$).

While the convergence in \eqref{eqn:asym_diff}  is well established \citep[see e.g.][]{ZhaoETAL2016} for continuously distributed survival and censoring times, it even remains true when ties are allowed. See the supplement for a detailed proof. The variance can be estimated straightforwardly by replacing $S_i$, $G_i$ and $A_i$ by their respective Kaplan--Meier ($\widehat S_i$, $\widehat G_i$) and Nelson--Aalen ($\widehat A_i$) estimators. In detail, $\widehat \sigma = \widehat \sigma_1^2 + \widehat \sigma_2^2$ and
\begin{equation}\label{eqn:sigma-i-hat}
\widehat\sigma^2_i =  \frac{n}{n_i} \int_0^\tau \left\{\int_x^\tau \widehat S_i(t) \,\mathrm{ d }t\right\}^2 \frac{1}{\{1-\Delta \widehat A_i(x)\}\widehat S_{i-}(x)\widehat G_{i-}(x)} \;\mathrm{ d }\widehat A_i(x).
\end{equation}
Combining \eqref{eqn:asym_diff} and \eqref{eqn:sigma-i-hat}, we obtain an asymptotically valid test $\varphi = \mathbf{1}\{ \sqrt{n}|\widehat\mu_1 - \widehat \mu_2| / \widehat\sigma > z_{1-\alpha/2}\}$ for the null hypothesis of equal RMSTs:
\begin{align*}
\mathcal H_0: \mu_1=\mu_2.
\end{align*}
Here, $z_{1-\alpha/2}$ denotes the $(1-\alpha/2)$-quantile of a standard normal distribution.  However, for small sample sizes this test has an inflated type-I error control, as seen in \citet{horiguchi2020permutation} and in Section~\ref{sec:sim}. To tackle this problem, \cite{horiguchi2020permutation} proposed a permutation approach. In the next subsection, we discuss their permutation approach as well as its limitations and propose an improved permutation strategy, both hypothesis testing and confidence interval construction.

\subsection{Unstudentized permutation test and its  studentized version}\label{sec:permutation}
Following the idea of exact permutation tests \citep{lehmann2006testing,hemerik2018}, \cite{horiguchi2020permutation} recently proposed a permutation test for $\mathcal H_0:\mu_1=\mu_2$, which we call the \textit{unstudentized} test hereafter.

In detail, 
given the observed data $({\bm X}, {\bm \delta})\equiv \big \{(X_{ij},\delta_{ij}):\ i=1, 2;\,  j=1, \dots, n_i \big \},$ 
let $({\bm X}^\pi, {\bm \delta}^\pi)\equiv \big \{(X_{ij}^\pi,\delta_{ij}^\pi):\ i=1, 2;\,  j=1, \dots, n_i \big \}$  be its permutated version corresponding to a scramble of the treatment indicator. Note that the permutation is at the subject level and $(X_{ij},\delta_{ij})$ are permutated in pairs. 
\cite{horiguchi2020permutation} suggested using the permutation test $\varphi^\pi_{\text{HU}}=\mathbf{1}\{|\widehat \mu_1 - \widehat\mu_2|> q^\pi_{1-\alpha,HU}\}$ in case of small sample sizes, where $q^\pi_{1-\alpha,HU}$ is the $(1-\alpha)$-quantile of the permutation distribution $t\mapsto P\{|\widehat \mu_1^\pi - \widehat\mu_2^\pi|\leq t \vert ({\bm X}, {\bm \delta})\}$ given the observed data $({\bm X}, {\bm \delta})$. Here, $\widehat \mu_i^\pi$, $\widehat S_i^\pi$ denote the permutation counterparts of the original estimators by replacing the data $({\bm X}, {\bm \delta})$ with a permuted sample $({\bm X}^\pi, {\bm \delta}^\pi)$. 

Such permutation tests are known to be finitely exact, i.e. the type-I error is controlled not only asymptotically but for every fixed sample size, under exchangeable data. In the context of right-censored survival data, exchangeability implies equal survival and censoring distributions between the groups, respectively, i.e. $S_1=S_2$ and $G_1=G_2$. This is obviously a much stronger assumption on both the interested time-to-event outcome and the censoring distribubtions. In our context of RMST comparison, having potentially crossing survival curves in mind, it may occur that the null hypothesis $\mathcal H_0:\mu_1=\mu_2$ is true despite $S_1\neq S_2$ holds, as shown by the four examples in Figure~\ref{fig:survival_times}. In addition, the assumption of equal censoring distributions alone is also too restrictive, since side effects related to the treatment may lead to different drop-out rates for example.  
An additional disadvantage is that this unstudentized permutation strategy cannot be used to obtain valid confidence intervals because the fact $\mu_1\neq\mu_2$ clearly violates the exchangeability assumption.

\begin{figure}
	\centering
	\begin{minipage}{0.47\textwidth}
		\includegraphics[width=\textwidth]{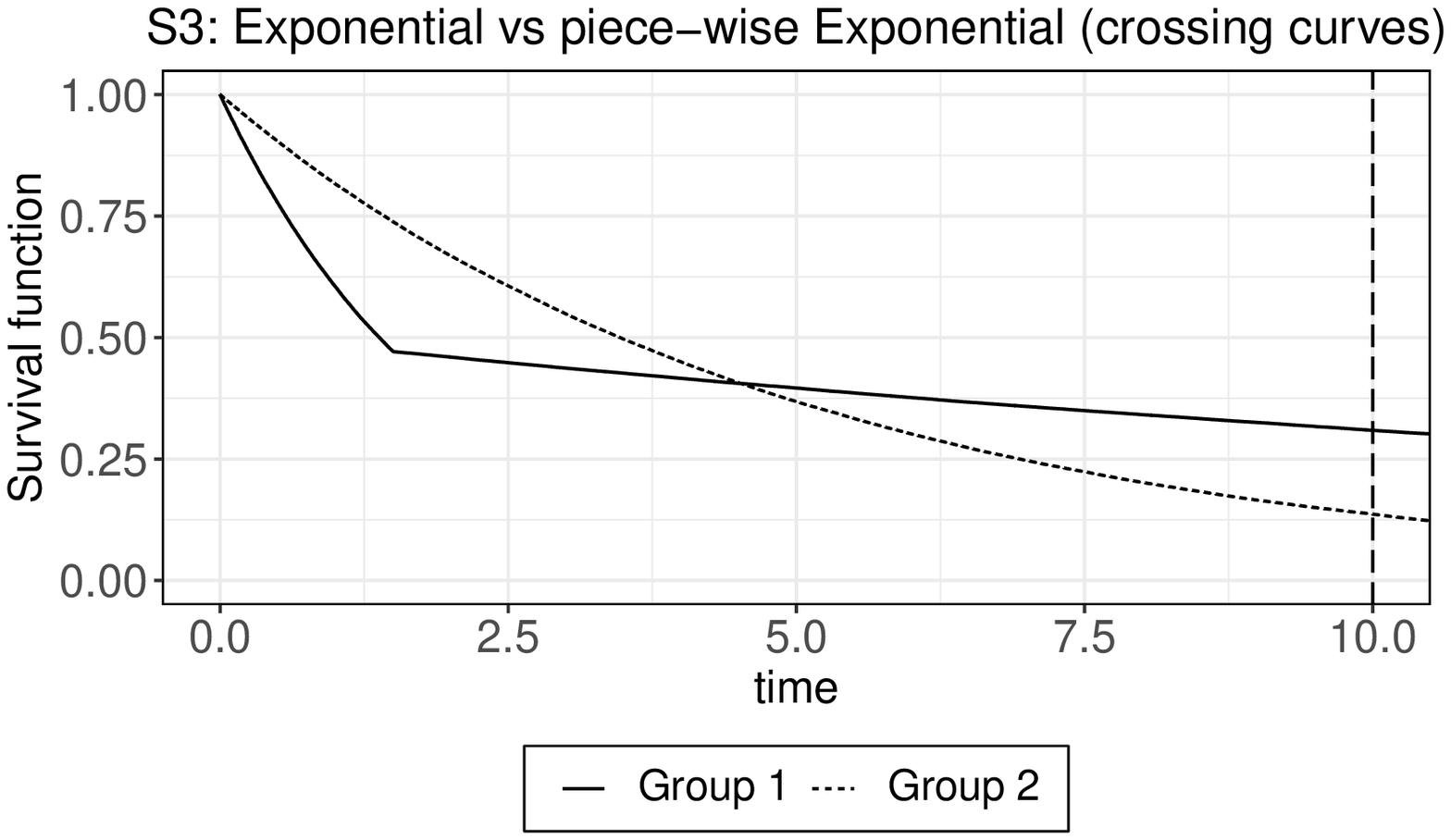}
	\end{minipage}
	\begin{minipage}{0.47\textwidth}
		\includegraphics[width=\textwidth]{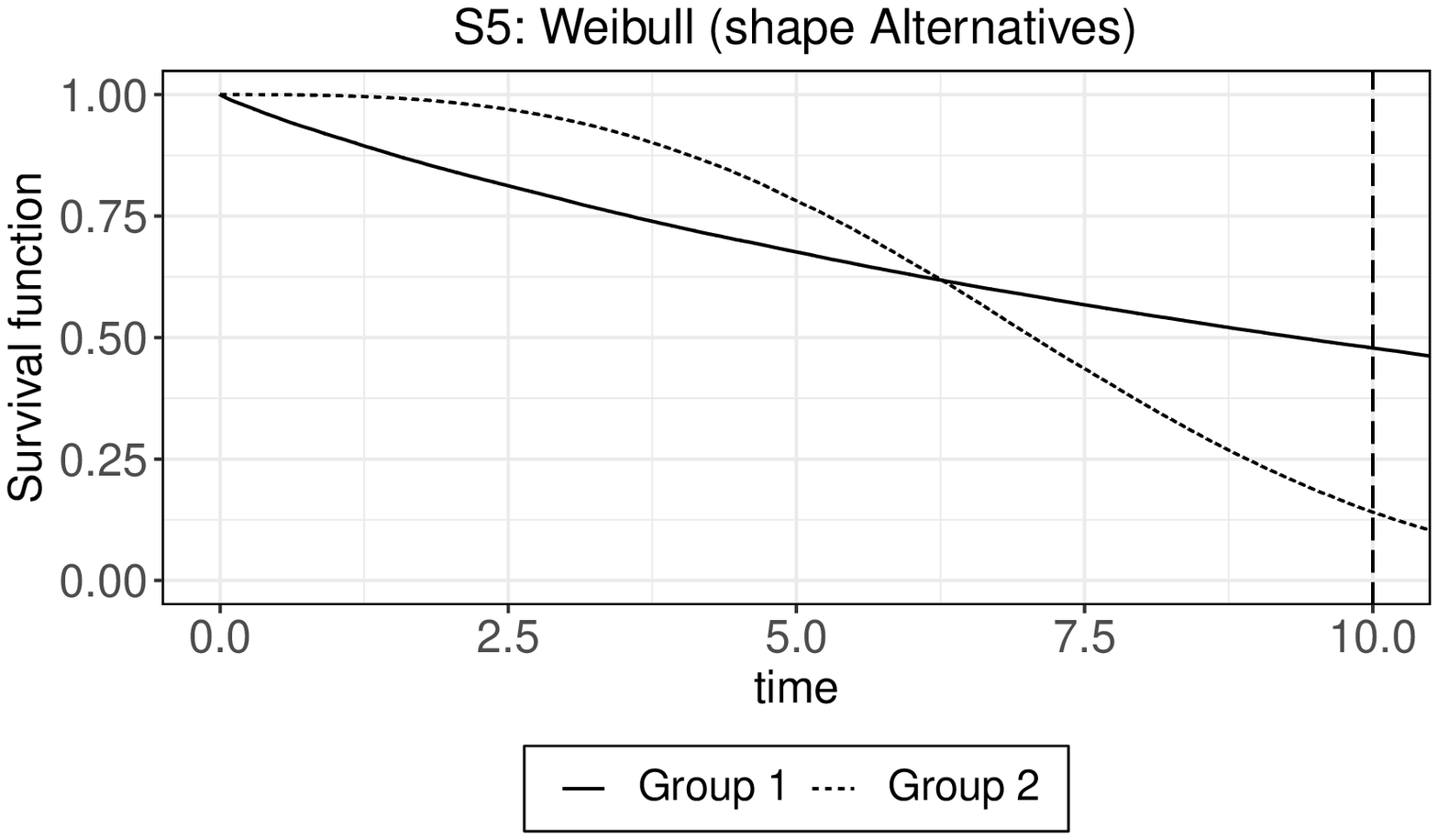}
	\end{minipage}
	
	\begin{minipage}{0.47\textwidth}
		\includegraphics[width=\textwidth]{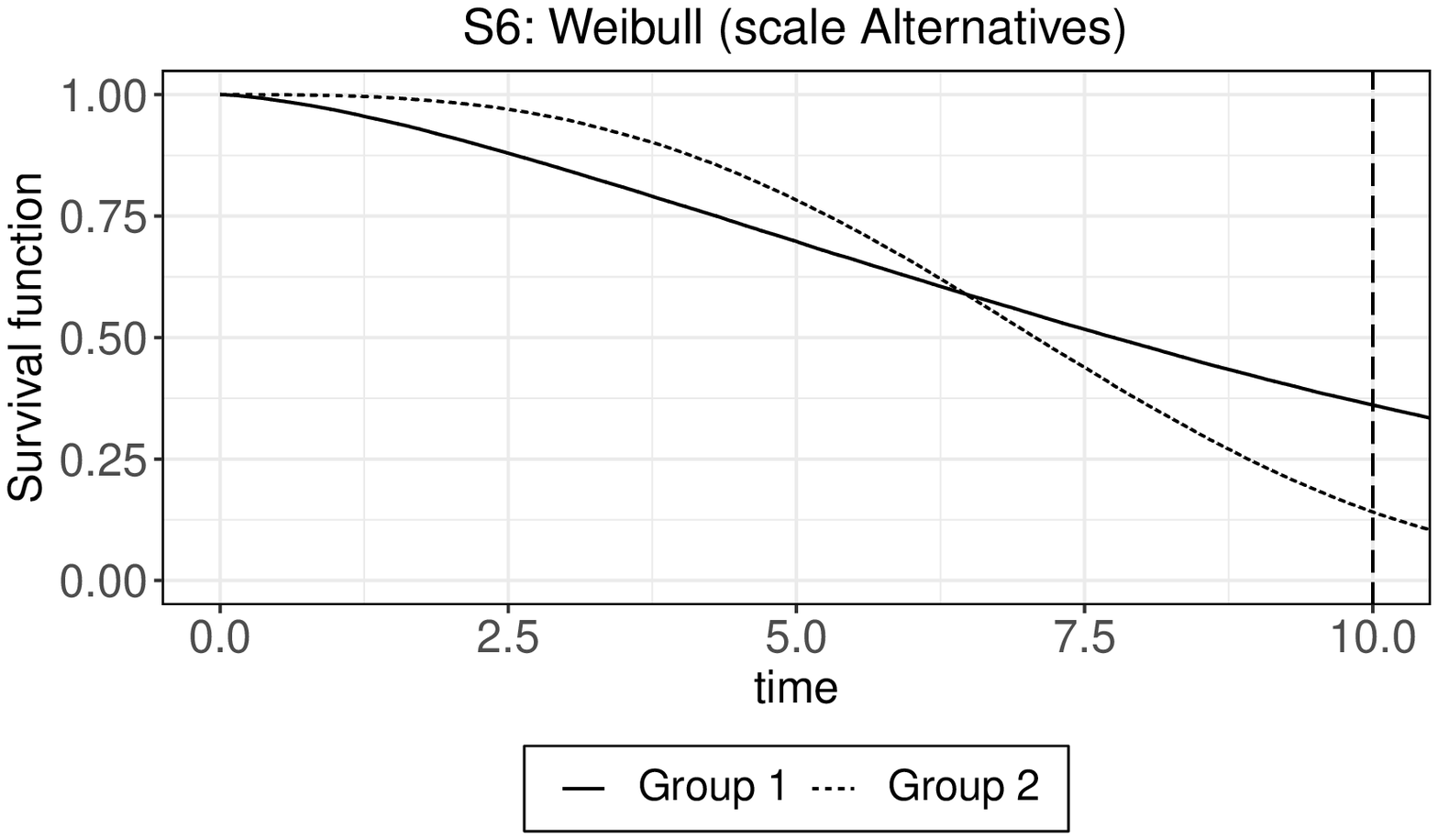}
	\end{minipage}
	\begin{minipage}{0.47\textwidth}
		\includegraphics[width=\textwidth]{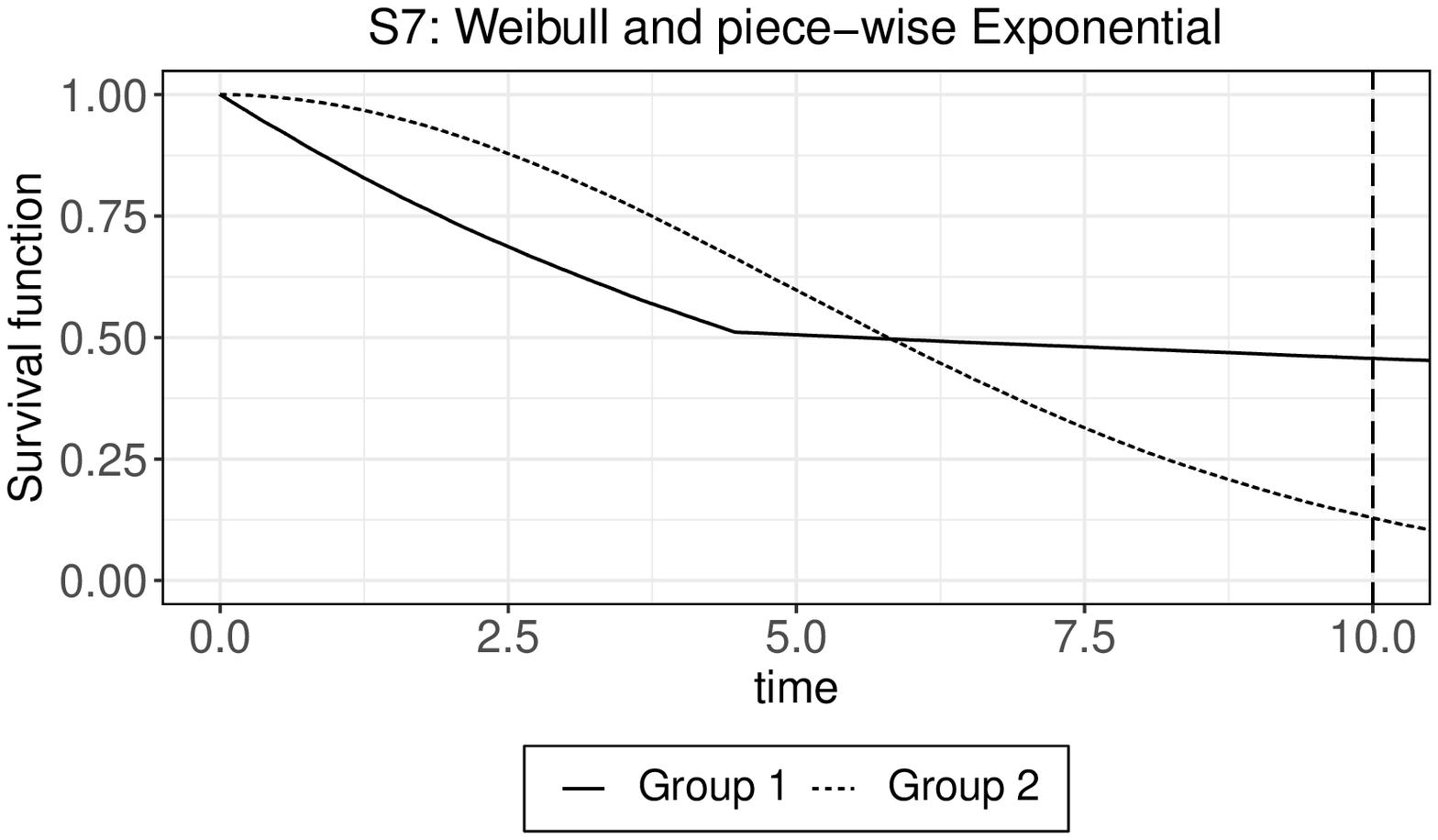}
	\end{minipage}
	\caption{Four examples, for which the groups' survival curves are different but their restricted mean survival time over $[0,10]$ coincides. The examples correspond to Scenarios S3, S5, S6 and S7 from the simulation study, see Section~\ref{sec:sim}}\label{fig:survival_times}
\end{figure}

To address all these issues, we propose a \textit{studentized} permutation test. To explain our idea, we need to understand first the asymptotic behavior of the permutated, unstudentized statistic, here $\widehat \mu_1^\pi - \widehat\mu_2^\pi$, under non-exchangeable settings. For that purpose, we introduce the pooled Kaplan--Meier estimator $\widehat S$ and the pooled Nelson--Aalen-estimator $\widehat A$. 
In detail, let $N(t)=\sum_{i,j} \delta_{ij}\mathbf{1}\{X_{ij}\leq t\}$ be the number of events up until $t$ and $Y(t)=\sum_{i,j}\mathbf{1}\{X_{i,j}\geq t\}$ be the number of individuals under risk at time $t$. Moreover, let $t_1,\ldots,t_d$, $d\in\N$, be the distinctive time points within $\mathbf{X}$. Then $\widehat S(t)=\prod_{k:t_k\leq t}[1- \Delta N(t_k)/Y(t_k)]$ and $ \widehat A(t)=\sum_{k:t_k\leq t} \Delta N(t_k)/Y_i(t_k)$. Now, define $y(t) = \sum_{i=1}^2 \kappa_i S_{i-}(t) G_{i-}(t)$ and $\nu(t) = \sum_{i=1}^2 \kappa_i \int_0^t G_{i-}(s) \,\mathrm{ d }F_i(s)$, where $F_i=1-S_i$. Combining the Glivenko-Cantelli Theorem and the continuous mapping theorem we obtain almost surely that $\widehat S(t)$ and $\widehat A(t)$ converge uniformly on $[0,\tau]$ to
$S(t) = \exp[ - A(t)]$ and $A(t) = \int_0^t 1/y(s)\mathrm{d}\nu(s)$, respectively, see the supplement for more details.
%
%{\color{blue} Delete: Denote by $S$, $G$ and $A$ their asymptotic limits. (See the supplement for a detailed representation of these functions.)}
%	
Having these additional notations at hand, we are now able to derive the asymptotic limit of the permuted, unstudentized statistic $\widehat \mu_1^\pi - \widehat\mu_2^\pi$:

\begin{theorem}\label{theo:perm_unstud}
	Under $\mathcal H_0:\mu_1=\mu_2$ as well as under $\mathcal H_1:\mu_1\neq \mu_2$, we have the following conditional convergence in distribution
	\begin{align*}
	\sqrt{n}(\widehat\mu_1^\pi - \widehat \mu_2^\pi)  \overset{ d}{\rightarrow} Z_\text{perm} \sim N(0,\sigma^2_\text{perm}), \textrm{ as }n\to\infty,
	\end{align*}
	given the data in probability, where the limiting variance is given by
	%	\begin{align*}
	%		\sigma_\text{perm}^2 = \frac{1}{\kappa_1\kappa_2} \int_0^\tau \left\{\int_x^\tau S(t) \,\mathrm{ d }t\right\}^2 \frac{1}{\{1-\Delta A(x)\}\{\kappa_1 S_{1-}(x)G_{1-}(x)+ \kappa_2S_{2-}(x)G_{2-}(x)\}} \;\mathrm{ d } A(x).
	%	\end{align*}
	%	{\color{blue} When we add all the formulas to the main text, then we can simplify the formula of the variance as follows}
	\begin{align*}
	\sigma_\text{perm}^2 = \frac{1}{\kappa_1\kappa_2} \int_0^\tau \left\{\int_x^\tau S(t) \,\mathrm{ d }t\right\}^2 \frac{1}{\{1-\Delta A(x)\}y(t)} \;\mathrm{ d } A(x).
	\end{align*}
\end{theorem}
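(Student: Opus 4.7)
The strategy is to reduce the permutation problem to a conditional empirical-process statement. Observe that the permuted group-$i$ sample $(\vX_i^\pi,\bm\delta_i^\pi)$ is a simple random sample without replacement, of size $n_i$, drawn from the pooled empirical law $\widehat F$ of the $n$ subject-level pairs $(X_{ij},\delta_{ij})$. By the Glivenko--Cantelli argument invoked just before the theorem, $\widehat F$ converges almost surely to a fixed mixture law on $[0,\infty)\times\{0,1\}$, so the exchangeable-bootstrap / permutation results of van~der~Vaart and Wellner (1996, Ch.~3.6) yield, conditionally on the data and in probability,
\[
\sqrt{n}\bigl(\widehat F_i^\pi - \widehat F\bigr)\;\rightsquigarrow\; \sqrt{(1-\kappa_i)/\kappa_i}\;\mathbb G_i \qquad (i=1,2),
\]
where $\mathbb G_i$ is a tight Gaussian process with the same covariance kernel as the i.i.d.\ empirical process from the pooled limit. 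The factor $\sqrt{(1-\kappa_i)/\kappa_i}$ is the usual finite-population correction for sampling without replacement.

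Second, I would transfer this to $\widehat\mu_i^\pi=\int_0^\tau \widehat S_i^\pi(t)\,\d t$ through the functional delta method applied to the composition of the Nelson--Aalen map, the Kaplan--Meier (product-integral) map, and integration on $[0,\tau]$. The requisite Hadamard differentiability remains valid for cumulative hazards with atoms, so ties create no essential difficulty; the influence function from the Duhamel equation, once integrated and squared, yields
\[
\mrm{Var}\bigl(\phi'(\widehat F)\,\mathbb G_i\bigr)\;=\; \int_0^\tau\Bigl\{\int_x^\tau S(t)\,\d t\Bigr\}^{2}\frac{1}{\{1-\Delta A(x)\}\,y(x)}\,\d A(x)\;=:\;V^\ast.
\]
Hence, conditionally on the data and in probability, $\sqrt{n}(\widehat\mu_i^\pi-\widehat\mu)\oDo N\bigl(0,\,[(1-\kappa_i)/\kappa_i]\,V^\ast\bigr)$, where $\widehat\mu=\int_0^\tau \widehat S(t)\,\d t$.

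Third, I would combine the two groups using the exact pool-conservation identity $n_1\widehat F_1^\pi+n_2\widehat F_2^\pi=n\widehat F$. Linearising through the same delta method gives
\[
(n_1/n)(\widehat\mu_1^\pi-\widehat\mu)+(n_2/n)(\widehat\mu_2^\pi-\widehat\mu)\;=\;o_\P(n^{-1/2}),
\]
so that $\sqrt{n}(\widehat\mu_1^\pi-\widehat\mu_2^\pi)=(n/n_2)\sqrt{n}(\widehat\mu_1^\pi-\widehat\mu)+o_\P(1)$. Step~2 with $i=1$ then produces a centred Gaussian conditional limit with variance $\kappa_2^{-2}\cdot[(1-\kappa_1)/\kappa_1]\cdot V^\ast = V^\ast/(\kappa_1\kappa_2)$, which is exactly the claimed $\sigma_\text{perm}^2$. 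Because no step uses exchangeability of the survival or censoring laws, the same conditional limit is valid under both $\mathcal H_0$ and $\mathcal H_1$, as stated.

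The hard part will be combining the conditional (given the data) empirical-process convergence with Hadamard differentiability of the Kaplan--Meier map on a function class whose pooled limit has atoms: one needs uniform control of $\Delta A$ at the jump points of $\widehat F$ together with continuity of the derivative map in the Skorokhod topology, which is precisely what produces the factor $1-\Delta A$ in $V^\ast$. Turning these two ingredients into a quantitative argument for \emph{permutation} samples, rather than for the standard bootstrap, is where the empirical-process machinery of van der Vaart and Wellner (1996) is genuinely needed.
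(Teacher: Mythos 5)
Your proposal is correct and rests on the same two ingredients as the paper's own proof: the conditional CLT for permutation (sampling-without-replacement) empirical processes from van der Vaart and Wellner (the paper invokes their Theorems 3.7.1 and 3.7.2 via Ditzhaus and Pauly), and the Hadamard differentiability of the Nelson--Aalen / product-integral / integration composition at a possibly atomic cumulative hazard, which produces exactly the kernel $V^\ast=\sigma^{\pi 2}$ appearing in the paper. The only point of divergence is the final combination step: the paper derives the joint bivariate conditional limit $(Z_1^\pi,Z_2^\pi)$ with cross-covariance $-\sigma^{\pi 2}$ and computes the variance of the difference directly, whereas you exploit the exact pool-conservation identity $n_1\widehat F_1^\pi+n_2\widehat F_2^\pi=n\widehat F$ to linearise $\widehat\mu_2^\pi-\widehat\mu$ as $-(n_1/n_2)(\widehat\mu_1^\pi-\widehat\mu)+o_P(n^{-1/2})$ and reduce everything to a single marginal limit. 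The two bookkeepings are equivalent --- the negative cross-covariance in the paper's joint limit is precisely the manifestation of that linear constraint --- and both give $\sigma^2_{\text{perm}}=V^\ast/(\kappa_1\kappa_2)$; note only that your reduction does not actually save work, since establishing the $o_P(n^{-1/2})$ remainder already requires the joint conditional linearisation of both groups, which is what the paper's cited result supplies.
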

In the special case $S_1=S_2$ and $G_1=G_2$, the variances $\sigma^2$ in (\ref{eqn:asym_diff}) and $\sigma_\text{perm}^2$ coincide. But, in general, they are different. Thus, applying the unstudentized permutation test for a non-exchangeable setting may lead to a systematic error, which is caused by a different variance of the permuted statistic. However, this can be solved by studentization, i.e. by including an appropriate variance estimator in the original test statistic as well as in its permutation counterpart. In fact, it can be shown that the permutation counterpart $\widehat \sigma^{\pi 2}$ of the variance estimator $\widehat \sigma^2$ converges, given the observed data, to the variance $\sigma^2_\text{perm}$ from Theorem \ref{theo:perm_unstud}. See the supplement for a detailed proof. In other words, inclusion of the variance estimator in the permutation step corrects the wrong variance. Consequently, we obtain
\begin{theorem}\label{theo:perm}
	Under $\mathcal H_0:\mu_1=\mu_2$ as well as under $\mathcal H_1:\mu_1\neq \mu_2$ we have the following conditional convergence in distribution
	\begin{align*}
	\sqrt{n}(\widehat\mu_1^\pi - \widehat \mu_2^\pi)/\widehat\sigma^\pi  \overset{ d}{\rightarrow} Z_\text{perm} \sim N(0,1) \textrm{ as }n\to\infty,
	\end{align*}
	given the observed data in probability. 
\end{theorem}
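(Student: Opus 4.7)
The plan is to deduce the theorem from Theorem~\ref{theo:perm_unstud} by a conditional Slutsky argument. Since that theorem already provides
\[
\sqrt{n}(\widehat\mu_1^\pi - \widehat\mu_2^\pi) \overset{d}{\rightarrow} N(0,\sigma^2_{\text{perm}})
\]
conditionally on the data in probability, it suffices to show that the permutation variance estimator is consistent, i.e.
\[
\widehat\sigma^{\pi 2} \overset{p}{\rightarrow} \sigma^2_{\text{perm}}
\]
given the observed data in probability. The stated convergence to $N(0,1)$ then follows by the continuous mapping theorem applied to the ratio.

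For the variance consistency, the first step is to control each permuted Kaplan--Meier and Nelson--Aalen process. Under a permutation, each group is a simple random sample without replacement from the pooled data, so via a coupling between sampling with and without replacement together with the Glivenko--Cantelli-type reasoning sketched for the pooled estimators before Theorem~\ref{theo:perm_unstud}, one obtains the uniform conditional convergences
\[
\sup_{t\in[0,\tau]} |\widehat S_i^\pi(t) - S(t)|,\quad \sup_{t\in[0,\tau]} |\widehat A_i^\pi(t) - A(t)|,\quad \sup_{t\in[0,\tau]} |\widehat G_i^\pi(t) - G^\ast(t)| \overset{p}{\rightarrow} 0
\]
given the data, for $i=1,2$, where $G^\ast$ denotes the pooled limit of the censoring Kaplan--Meier estimator. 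Writing $\widehat\sigma_i^{\pi 2}$ from \eqref{eqn:sigma-i-hat} as a functional that is continuous in the supremum norm on $D[0,\tau]$ of the triple $(\widehat S_i^\pi,\widehat G_i^\pi,\widehat A_i^\pi)$ and invoking the continuous mapping theorem then yields
\[
\widehat\sigma_i^{\pi 2} \overset{p}{\rightarrow} \frac{1}{\kappa_i} \int_0^\tau \left\{\int_x^\tau S(t)\,\mathrm{d}t\right\}^2 \frac{\mathrm{d}A(x)}{\{1-\Delta A(x)\}\,S_-(x)\,G^\ast_-(x)}
\]
conditionally in probability, for $i=1,2$.

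The concluding algebraic step uses the product-integral identity $S_-(x)\,G^\ast_-(x) = y(x)$, which is a direct consequence of the pooled Kaplan--Meier construction: $S$, $G^\ast$, and the normalized at-risk function $y$ are all built from integrals against $1/y$ with complementary event/censoring integrators, so a product-integral calculation (handling atoms via the left-continuous versions) gives exactly this identity. Substituting it into the per-group limit and using $\kappa_1^{-1}+\kappa_2^{-1} = (\kappa_1\kappa_2)^{-1}$, which follows from $\kappa_1+\kappa_2=1$, the sum $\widehat\sigma_1^{\pi 2}+\widehat\sigma_2^{\pi 2}$ matches the formula for $\sigma^2_{\text{perm}}$ in Theorem~\ref{theo:perm_unstud}.

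The main obstacle I anticipate is the careful handling of ties throughout, so that (i) the uniform convergence of $\widehat S_i^\pi$, $\widehat G_i^\pi$ and $\widehat A_i^\pi$ holds on all of $[0,\tau]$ despite the data being discrete, (ii) the jump-correction factor $1-\Delta A(x)$ in \eqref{eqn:sigma-i-hat} passes to the limit correctly and agrees with the corresponding factor in $\sigma^2_{\text{perm}}$, and (iii) the identity $S_-G^\ast_- = y$ remains valid at atoms of $A$. These are the points at which the empirical-process and product-integral machinery of \cite{vaartWellner1996} would do the heavy lifting, consistent with the tied-data treatment emphasized elsewhere in the paper.
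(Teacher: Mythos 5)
Your proposal is correct and follows the same overall strategy as the paper: reduce Theorem~\ref{theo:perm} to Theorem~\ref{theo:perm_unstud} plus conditional consistency of $\widehat\sigma^{\pi 2}$ for $\sigma^2_{\text{perm}}$, then conclude by Slutsky. The one genuine difference is how the censoring term in \eqref{eqn:sigma-i-hat} is handled. You track the permuted censoring Kaplan--Meier estimator $\widehat G_i^\pi$ as a separate object, prove its uniform conditional convergence to a pooled limit $G^\ast$, and then need the limit identity $S_-G^\ast_-=y$, whose verification at atoms (where $(1-\mathrm{d}A)(1-\mathrm{d}A^c)\neq 1-\mathrm{d}(A+A^c)$ because of cross terms) you correctly flag as delicate. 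The paper instead invokes the exact finite-sample identity \eqref{eqn:Yi=SiGi}, $\widehat S^\pi_{i-}(t)\widehat G^\pi_{i-}(t)=n_i^{-1}Y_i^\pi(t)$, \emph{before} passing to the limit; this removes $\widehat G_i^\pi$ from the argument entirely and reduces everything to the uniform convergence $\sup_t|Y_i^\pi(t)/Y(t)-\kappa_i|\to 0$ (and likewise for $N_i^\pi/N$), obtained from the finite-sampling lemma of \citet{neuhaus:1993} --- essentially the same without-replacement coupling you describe, but applied only to the at-risk and counting processes. The paper's route is more economical and automatically settles your tie-handling concern (iii), since the identity is exact for every $n$ and every permutation; your route is equivalent but requires you to actually carry out the product-integral bookkeeping at atoms, which is the only place where your sketch still owes a computation. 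The remaining steps (continuity of the variance functional under uniform convergence plus bounded variation of $\widehat A_i^\pi$, convergence of the jump term $\Delta\widehat A_i^\pi(x)\to\Delta A(x)$, and the algebra $\kappa_1^{-1}+\kappa_2^{-1}=(\kappa_1\kappa_2)^{-1}$) match the paper.
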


From Theorem~\ref{theo:perm} we obtain that $\sqrt{n}{|\widehat\mu_1 - \widehat \mu_2-(\mu_1-\mu_2)|}/{ \widehat\sigma}$ and $\sqrt{n}|\widehat\mu_1^\pi - \widehat \mu_2^\pi|/\widehat\sigma^\pi$ have the same asymptotic distribution, namely $|Z|$ for $Z\sim N(0,1)$. Moreover, we like to point out that this is true under alternatives as well, which allows us to formulate even asymptotically valid confidence intervals. For that purpose, let $q^\pi_{1-\alpha}$  denote the $(1-\alpha)$-quantile of the conditional distribution $t\mapsto  \P\{ \sqrt{n}|\widehat\mu_1^\pi - \widehat \mu_2^\pi| / \widehat\sigma^\pi \leq t  \vert \  ({\bm X}, {\bm \delta})\}$. Then the studentized permutation test $\varphi^\pi$ and the permutation-based confidence interval $I^\pi$ for $\mu_1-\mu_2$ are given by
\begin{align*}
\varphi^\pi = \mathbf{1}\Bigl\{ \sqrt{n}\frac{|\widehat\mu_1 - \widehat \mu_2|}{ \widehat\sigma} > q^\pi_{1-\alpha} \Bigr\}, \quad I^\pi = \Bigl[ \widehat \mu_1 - \widehat \mu_2 \pm n^{-1/2}\widehat\sigma\: q^\pi_{1-\alpha}   \Bigr].
\end{align*}
%{\bf [not clear about $\widehat\sigma$ here above??]} {\color{red} $\widehat \sigma$ is fine here. The idea is that $\sqrt{n}{|\widehat\mu_1 - \widehat \mu_2|}/{ \widehat\sigma}$ and $\sqrt{n}\frac{|\widehat\mu_1^\pi - \widehat \mu_2^\pi|}{ \widehat\sigma^\pi}$ have the same asymptotic distribution, namely $|Z|$ for $Z\sim N(0,1)$. That is why we use $q^\pi_{1-\alpha}$ instead of $z_{1-\alpha/2}$ in the definition of the original asymptotic test. However, the test statistic  $\sqrt{n}{|\widehat\mu_1 - \widehat \mu_2|}/{ \widehat\sigma}$ remains untouched. Having understood this, I guess the representation for the confidence interval is clear. But I can give more explanations when needed.}

Combining \eqref{eqn:asym_diff},  Theorem~\ref{theo:perm}, as well as Lemma 1 and Theorem 7 of \cite{janssenPauls2003}, we can deduce that the conditional quantile $q^\pi_{1-\alpha}$ tends to $z_{1-\alpha/2}$ and we obtain:

\begin{corollary}\label{cor:tests+confidence_intervals}
	(i) The permutation test $\varphi^\pi$ has asymptotic level $\alpha$ for $\mathcal{H}_0: \mu_1= \mu_2$ and is consistent for general alternatives  $\mathcal{H}_1: \mu_1\neq \mu_2$, i.e. $
	E_{\mathcal{H}_0}(\varphi^\pi) \to \alpha$ and $ E_{\mathcal{H}_1}(\varphi^\pi) \to 1$ as $n\to\infty$.	
	(ii) The permutation-based confidence interval $I^\pi$ has asymptotic confidence level $1-\alpha$, i.e.,
	$P(\mu_1-\mu_2\in I^\pi)\to 1-\alpha$ as $n\to\infty$.
\end{corollary}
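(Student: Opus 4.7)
The plan is to assemble the corollary from three ingredients that the paper has already put in place: the unconditional central limit result in \eqref{eqn:asym_diff}; the consistency $\widehat\sigma^2 \overset{p}{\to} \sigma^2$, which follows from Glivenko--Cantelli and the continuous mapping theorem applied to the Kaplan--Meier and Nelson--Aalen plug-ins in \eqref{eqn:sigma-i-hat}; and the conditional permutation CLT in Theorem~\ref{theo:perm}. The bridge between Theorem~\ref{theo:perm} and the behavior of the random quantile $q^\pi_{1-\alpha}$ is provided by the quantile convergence lemma of \cite{janssenPauls2003} (their Lemma 1 and Theorem 7): whenever the conditional distribution of $\sqrt{n}|\widehat\mu_1^\pi - \widehat\mu_2^\pi|/\widehat\sigma^\pi$ converges weakly in probability to the law of $|N(0,1)|$, whose CDF is continuous and strictly increasing at $z_{1-\alpha/2}$, one obtains
\begin{equation*}
q^\pi_{1-\alpha} \overset{p}{\rightarrow} z_{1-\alpha/2}.
\end{equation*}
Crucially, Theorem~\ref{theo:perm} supplies this conditional convergence under both $\mathcal H_0$ and $\mathcal H_1$, so the quantile convergence holds irrespective of whether $\mu_1=\mu_2$.

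For part (i), under $\mathcal H_0$ combine \eqref{eqn:asym_diff} with $\widehat\sigma\overset{p}{\to}\sigma$ to get $\sqrt{n}(\widehat\mu_1-\widehat\mu_2)/\widehat\sigma \overset{d}{\to} N(0,1)$, hence $T_n := \sqrt{n}|\widehat\mu_1-\widehat\mu_2|/\widehat\sigma \overset{d}{\to}|N(0,1)|$. Coupling this with the quantile convergence $q^\pi_{1-\alpha}\overset{p}{\to}z_{1-\alpha/2}$ via Slutsky's lemma gives
\begin{equation*}
E_{\mathcal H_0}(\varphi^\pi) = P_{\mathcal H_0}\{T_n > q^\pi_{1-\alpha}\} \rightarrow P(|N(0,1)|>z_{1-\alpha/2}) = \alpha.
\end{equation*}
For consistency under $\mathcal H_1$, the law of large numbers for Kaplan--Meier integrals yields $\widehat\mu_1-\widehat\mu_2\to \mu_1-\mu_2\neq 0$ almost surely and $\widehat\sigma$ remains bounded in probability, so $T_n\to\infty$ in probability. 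Since Theorem~\ref{theo:perm} is stated under $\mathcal H_1$ as well, $q^\pi_{1-\alpha}$ stays bounded (indeed converges to $z_{1-\alpha/2}$), and therefore $E_{\mathcal H_1}(\varphi^\pi)\to 1$.

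Part (ii) is a direct reformulation. The event $\mu_1-\mu_2\in I^\pi$ is the same as
\begin{equation*}
\sqrt{n}\,|(\widehat\mu_1-\widehat\mu_2)-(\mu_1-\mu_2)|/\widehat\sigma \leq q^\pi_{1-\alpha}.
\end{equation*}
By \eqref{eqn:asym_diff} and $\widehat\sigma\overset{p}{\to}\sigma$, the left-hand side converges in distribution to $|N(0,1)|$, regardless of whether we are on the null or the alternative. By the quantile convergence above, applied under the true (possibly non-null) data distribution, the right-hand side converges in probability to $z_{1-\alpha/2}$. One more application of Slutsky gives $P(\mu_1-\mu_2\in I^\pi)\to P(|N(0,1)|\leq z_{1-\alpha/2})=1-\alpha$.

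The only non-routine step is the passage from Theorem~\ref{theo:perm} to the in-probability convergence of the random quantile $q^\pi_{1-\alpha}$, especially under $\mathcal H_1$ where the unconditional law of the permuted data is no longer exchangeable. This is precisely what the Janssen--Pauls machinery delivers once the continuity of the limiting CDF at $z_{1-\alpha/2}$ is noted; everything else reduces to Slutsky-type manipulations with consistent variance estimators and the continuous mapping theorem.
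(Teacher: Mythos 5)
Your proposal is correct and follows essentially the same route as the paper: the authors likewise derive $q^\pi_{1-\alpha}\oPo z_{1-\alpha/2}$ by combining \eqref{eqn:asym_diff}, Theorem~\ref{theo:perm} and Lemma~1 and Theorem~7 of \cite{janssenPauls2003}, and then obtain both parts by the same Slutsky-type manipulations; your write-up merely spells out the details (quantile convergence under $\mathcal H_1$, divergence of the test statistic, reformulation of the coverage event) that the paper leaves implicit.
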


\subsection{Permutation-based confidence intervals for RMST ratio}\label{sec:ratio}

In this subsection, we briefly explain how the permutation strategy can also be adopted to obtain confidence intervals for the ratio  $\mu_1/\mu_2$. While the studentization idea directly applied to the ratio would lead to inappropriate confidence intervals for a ratio, i.e. $\widehat\mu_1 / \widehat \mu_2 \pm D_n$, we consider the log-transformation $\log(\widehat\mu_1) - \log(\widehat \mu_2)$ instead. Analogous to \eqref{eqn:asym_diff}, it can be shown that
\begin{equation}\label{eqn:asym_ratio}
\sqrt{n}\left[ \left\{\log(\widehat \mu_1) - \log(\widehat \mu_2)\right\} - \left\{\log(\mu_1) - \log(\mu_2)\right\} \right] \to Z \sim N(0,\sigma_{\text{rat}}^2), \quad \sigma_{\text{rat}}^2 = \frac{\sigma_1^2}{\mu_1^2} + \frac{\sigma_2^2}{\mu_2^2}.
\end{equation}
The asymptotic variance can be estimated by $\widehat\sigma_{\text{rat}}^2 = (\widehat\sigma_1^2/\widehat\mu_1^2) + (\widehat\sigma_2^2/\widehat\mu_2^2)$. Consequently, an asymptotically valid confidence interval for $\mu_1/\mu_2$ and its studentized permutation counterpart are given respectively by
\begin{align*}
I_{\text{rat}} &= \Bigl[\exp\left\{\log(\widehat \mu_1) - \log(\widehat \mu_2) \pm n^{-1/2}\widehat\sigma_{\text{rat}} z_{1-\alpha/2} \right\}\Bigr], \\
I^\pi_{\text{rat}} &= \Bigl[\exp\left\{\log(\widehat \mu_1) - \log(\widehat \mu_2) \pm n^{-1/2}\widehat\sigma_{\text{rat}} q^\pi_{1-\alpha,\text{rat}}\right\}\Bigr],
\end{align*}
where $q^\pi_{1-\alpha}$ denotes the $(1-\alpha)$-quantile of the conditional distribution $t\mapsto  \P\{ \sqrt{n}|\log(\widehat\mu_1^\pi) - \log(\widehat \mu_2^\pi)| / \widehat\sigma^\pi_{\text{rat}} \leq t  \vert \  ({\bm X},{\bm \delta})\}$. Similarly to Corollary~\ref{cor:tests+confidence_intervals}, we can prove that the permutation-based confidence interval is asymptotically valid:
\begin{corollary}\label{cor:tests+confidence_intervals;rat}
	The permutation-based confidence interval $I^\pi_{\text{rat}}$ for the ratio $\mu_1/\mu_2$ has asymptotic confidence level $1-\alpha$, i.e., $P(\mu_1/\mu_2\in I^\pi_{\text{rat}})\to 1-\alpha$ as $n\to\infty$.
\end{corollary}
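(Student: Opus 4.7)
My plan is to mirror the argument behind Corollary \ref{cor:tests+confidence_intervals}, transporting every ingredient through the log transformation. The target is to establish $q^\pi_{1-\alpha,\text{rat}}\stackrel{p}{\to} z_{1-\alpha/2}$; combined with \eqref{eqn:asym_ratio}, the consistency $\widehat\sigma_{\text{rat}}\stackrel{p}{\to}\sigma_{\text{rat}}$ (itself inherited from the already-established consistencies of $\widehat\sigma_i^2$ and of $\widehat\mu_i$ used for Theorems \ref{theo:perm_unstud}--\ref{theo:perm}), and Slutsky's theorem, this will give the stated coverage $P(\mu_1/\mu_2\in I^\pi_{\text{rat}})\to 1-\alpha$.

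To get the quantile convergence, I first analyse the numerator of the permuted test statistic. The uniform convergence of the pooled Kaplan--Meier estimator discussed just before Theorem \ref{theo:perm_unstud} implies that, conditionally on the data and in probability, $\widehat\mu_i^\pi \to \mu^\ast:=\int_0^\tau S(t)\,\mathrm{d}t$ for $i=1,2$. Since both permutation estimators share the same pooled limit $\mu^\ast$, a delta-method step for the map $(x,y)\mapsto \log x - \log y$ at $(\mu^\ast,\mu^\ast)$ reduces
\[
\sqrt{n}\{\log(\widehat\mu_1^\pi)-\log(\widehat\mu_2^\pi)\} = \frac{\sqrt{n}(\widehat\mu_1^\pi-\widehat\mu_2^\pi)}{\mu^\ast} + o_p(1)
\]
conditionally, and Theorem \ref{theo:perm_unstud} then yields a conditional $N(0,\sigma^2_{\text{perm}}/\mu^{\ast 2})$ limit.

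For the denominator, the same pooled-limit considerations, applied to the permutation analogues of \eqref{eqn:sigma-i-hat} already used in proving Theorem \ref{theo:perm}, give $\widehat\sigma_i^{\pi 2}\stackrel{p}{\to}\sigma_{i,\text{perm}}^2$ with $\sigma_{1,\text{perm}}^2+\sigma_{2,\text{perm}}^2=\sigma_{\text{perm}}^2$. The continuous mapping theorem then produces $\widehat\sigma_{\text{rat}}^{\pi 2}=\widehat\sigma_1^{\pi 2}/\widehat\mu_1^{\pi 2}+\widehat\sigma_2^{\pi 2}/\widehat\mu_2^{\pi 2}\stackrel{p}{\to} \sigma^2_{\text{perm}}/\mu^{\ast 2}$ conditionally. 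A conditional Slutsky step delivers $N(0,1)$ as the conditional limit of the full studentized permutation statistic, and Lemma 1 and Theorem 7 of \citet{janssenPauls2003} transport this distributional convergence to the desired quantile convergence $q^\pi_{1-\alpha,\text{rat}}\stackrel{p}{\to} z_{1-\alpha/2}$.

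The main obstacle I anticipate is the conditional delta-method step: I need to argue that the random centring used in linearising $\log(\widehat\mu_1^\pi)-\log(\widehat\mu_2^\pi)$ can be taken to be the common pooled limit $\mu^\ast$ rather than $\widehat\mu_1^\pi$ or $\widehat\mu_2^\pi$ individually. This is ultimately a consequence of the fact -- specific to the permutation construction -- that both $\widehat\mu_1^\pi$ and $\widehat\mu_2^\pi$ converge to the same limit, and of $\mu^\ast>0$ under the assumption $P(X_{ij}\geq \tau)>0$. Once this is in place, the remainder of the argument is routine bookkeeping along the lines of the proofs of Theorem \ref{theo:perm} and Corollary \ref{cor:tests+confidence_intervals}.
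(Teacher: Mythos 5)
Your proposal is correct and follows essentially the same route as the paper: the paper reduces the claim to the conditional asymptotic standard normality of $\sqrt{n}\{\log(\widehat\mu_1^\pi)-\log(\widehat\mu_2^\pi)\}/\widehat\sigma_{\text{rat}}^\pi$, obtains the numerator's limit from the joint permutation convergence \eqref{eqn:perm_Z} together with the $\delta$-method (your linearisation at the common pooled limit $\mu^\ast=\int_0^\tau S$ is exactly this step made explicit), and handles $\widehat\sigma_{\text{rat}}^{\pi 2}$ by the same consistency argument as for Theorem~\ref{theo:perm}, before invoking \cite{janssenPauls2003} for the quantile convergence. Your write-up merely supplies details the paper leaves implicit; no gap.
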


\section{Simulations}\label{sec:sim}
To complement our theoretical discussion from the previous section, we conducted an extensive simulation study to examine the performance of the permutation test as well as the permutation-based confidence intervals. For ease of presentation, we restricted ourselves to the difference of the RMSTs. Additional results for the ratio are deferred to the supplement.

\subsection{Setup}
We considered seven different choices for the survival times distribution:
\begin{enumerate}
	\item[S1] \textit{Exponential distributions (proportional hazards)}: $T_{11}\sim \text{Exp}(0.2)$ and $T_{21}\sim\text{Exp}(\lambda_{\delta,1})$.
	
	\item[S2] \textit{Exponential distribution vs piece-wise Exponential (late departures)}: $T_{11}\sim \text{Exp}(0.2)$ and $T_{21}$ has  piece-wise constant hazard function $\alpha_2(t) = 0.2\cdot \mathbf{1}\{t\leq 2\} + \lambda_{\delta,2}\: \mathbf{1}\{t>2\} $.
	
	\item[S3] \textit{Exponential distribution vs piece-wise Exponential (crossing curves)}: $T_{11}\sim \text{Exp}(0.2)$ and $T_{21}$ has  piece-wise constant hazard function $\alpha_2(t) = 0.5\cdot \mathbf{1}\{t\leq c_{\delta,1}\} + 0.05\cdot \mathbf{1}\{t>c_{\delta,1}\} $.
	
	\item[S4] \textit{Lognormal scale alternatives}: $T_{11}\sim \text{logN}(2,0.25)$ and $T_{11}\sim \text{logN}(\mu_\delta,0.25)$.
	
	\item[S5] \textit{Weibull shape alternatives (crossing curves)}: $T_{11}\sim \text{Weib}(3,8)$ and $T_{21}\sim \text{Weib}(\text{shape}_\delta,14)$.
	
	\item[S6] \textit{Weibull scale alternatives (crossing curves)}: $T_{11}\sim \text{Weib}(3,8)$ and $T_{21}\sim \text{Weib}(1.5,\text{scale}_\delta)$.
	
	\item[S7] \textit{Weibull vs piece-wise Exponential (crossing curves)}: $T_{11}\sim \text{Weib}(2,7)$ and $T_{21}$ has  piece-wise constant hazard function $\alpha_2(t) = 0.15\cdot \mathbf{1}\{t\leq c_{\delta,2}\} + 0.02 \cdot\mathbf{1}\{t>c_{\delta,2}\} $.
\end{enumerate}
The parameters $\lambda_{\delta,k}$, $c_{\delta,k}$, $\mu_\delta$, $\text{shape}_\delta$ and $\text{scale}_\delta$ depend on the difference $\delta=\mu_2-\mu_1$ of the RMSTs. For our simulations, we considered $\delta=0$ for the settings under the null hypotheses and $\delta\in\{0.5,1,1.5,2\}$ for the different alternative scenarios. See Figure~\ref{fig:survival_times} for an illustration of the Scenarios S3, S5, S6 and S7 with crossing curves under the null hypotheses ($\delta=0$). Under the null hypothesis ($\delta = 0$) Scenarios S1 and S2 coincide. That is why just one of the respective two scenarios was included in the simulation study whenever $\delta=0$ was considered. For the censoring, we chose the following three censoring configurations, see also Figure~\ref{fig:survival_censoring} for the respective survival functions: 
\begin{enumerate}
	\item[C1] \textit{unequally Weibull distributed censoring (Weib, uneq)}: $C_{11}\sim \text{Weib}(3,18)$ and $C_{21}\sim\text{Weib}(0.5,40)$.
	
	\item[C2] \textit{equally uniformly distributed censoring (Unif, eq)}: $C_{11}\sim \text{Unif}[0,25]$ and $C_{21}\sim\text{Unif}[0,25]$.
	
	\item[C3] \textit{equally Weibull distributed censoring (Weib, eq)}:  $C_{11}\sim \text{Weib}(3,15)$ and $C_{21}\sim\text{Weib}(3,15)$.
\end{enumerate}
For all simulations, we studied one balanced $\mathbf{n}_{\text{bal}} = (20,20)$ and two unbalanced, $\mathbf{n}_{\text{incr}} = (16,24)$ and $\mathbf{n}_{\text{decr}} = (24,16)$, sample size settings and, additionally, considered their multiples $K\mathbf{n}_{\text{bal}}, K\mathbf{n}_{\text{incr}}, K\mathbf{n}_{\text{decr}}$ with $K=2,4$ for larger sample sizes.

\begin{figure}
	\centering
	\begin{minipage}{0.33\textwidth}
		\includegraphics[width=\textwidth]{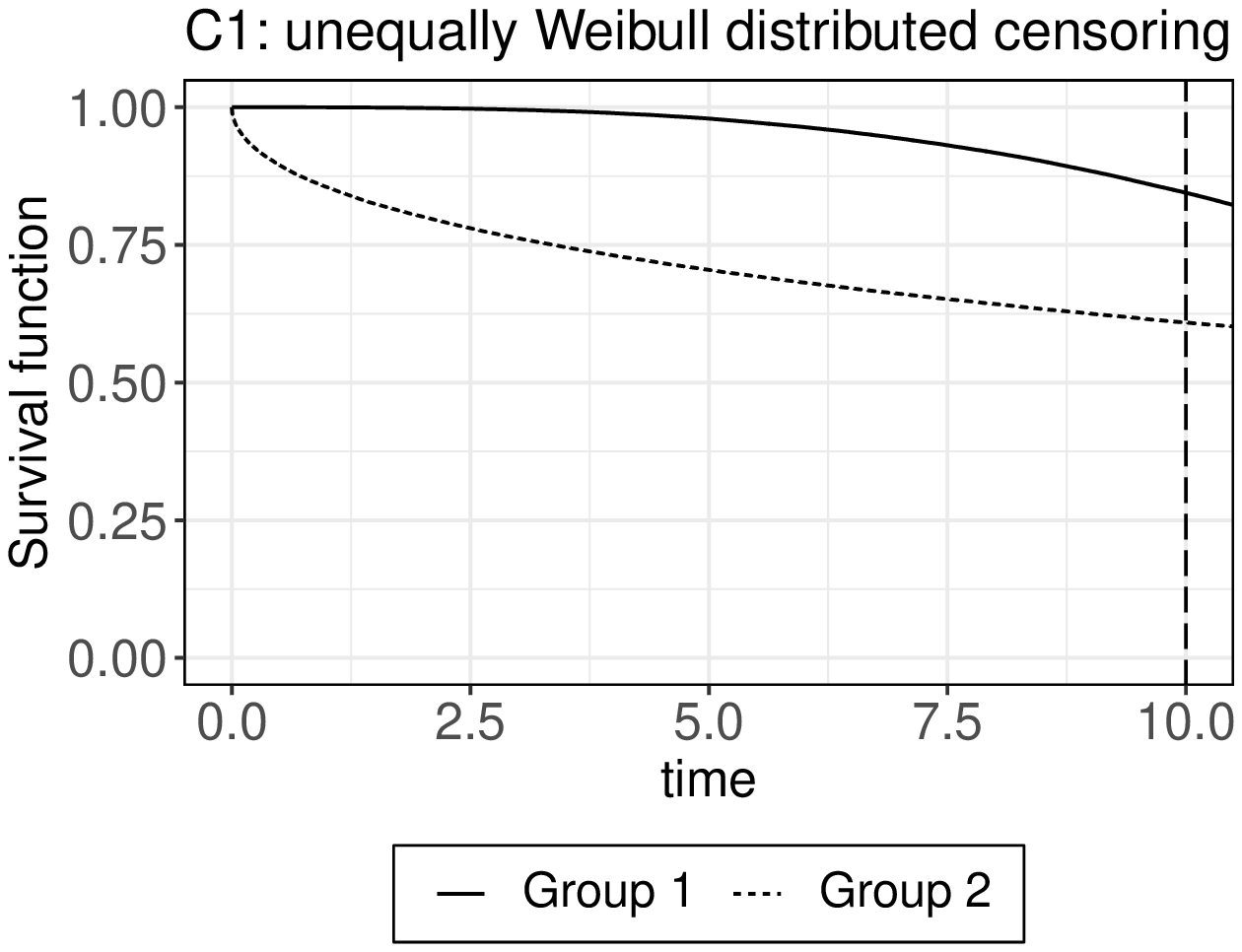}
	\end{minipage}
	\hspace{-0.3cm}
	\begin{minipage}{0.33\textwidth}
		\includegraphics[width=\textwidth]{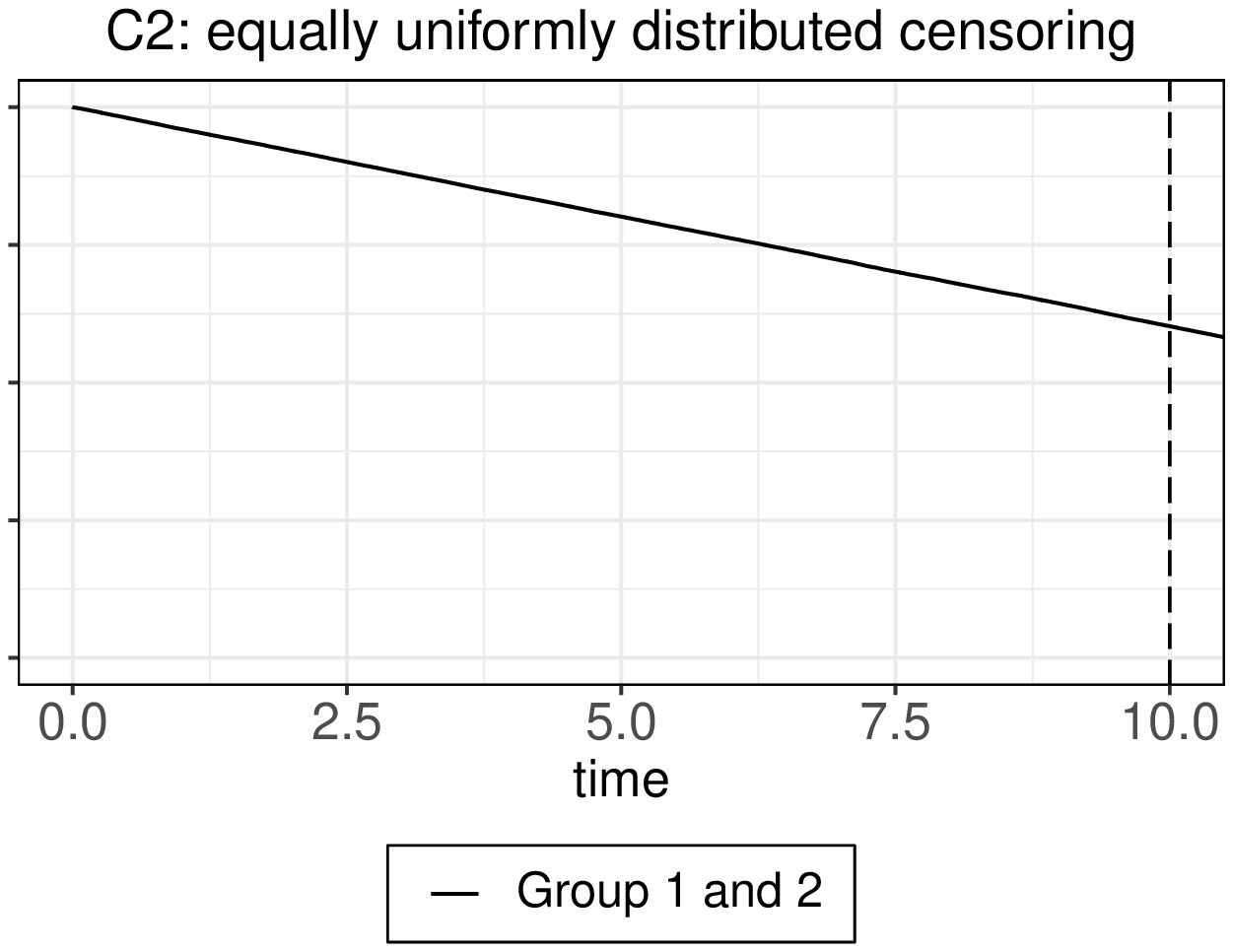}
	\end{minipage}
	\hspace{-0.3cm}
	\begin{minipage}{0.33\textwidth}
		\includegraphics[width=\textwidth]{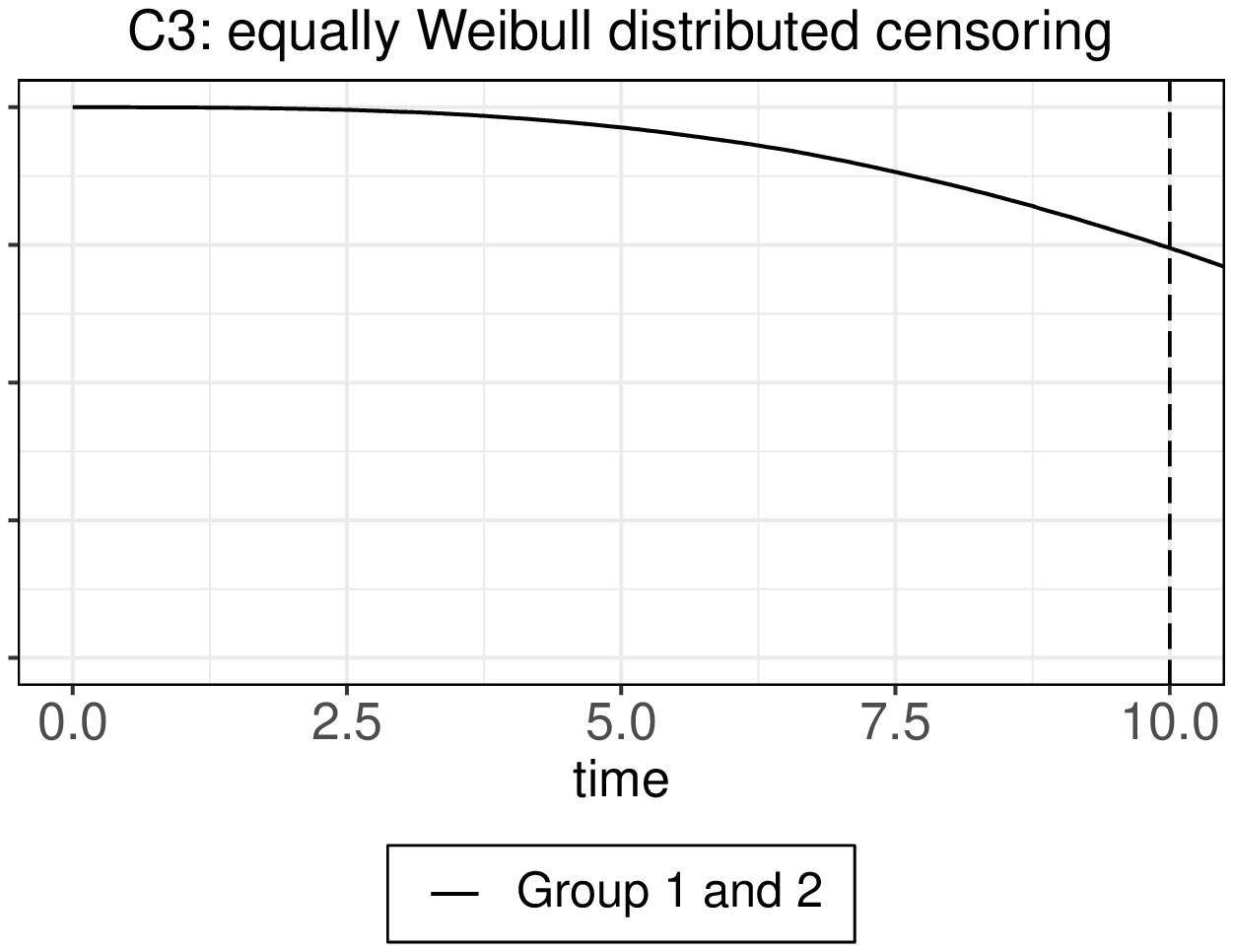}
	\end{minipage}
	\caption{The survival curves of the three different censoring scenarios.}\label{fig:survival_censoring}
\end{figure}

For the type-I error and power comparisons, we included the asymptotic test, the studentized and unstudentized permutation tests. % as well as the multiplier bootstrap approach (multi Boot) of \cite{zhaoETAL2012} and \cite{tian2014predicting}.
While we programmed the asymptotic and the studentized permutation test by ourselves, the evaluation of the unstudentized permutation test was carried out by means of R-package \textit{survRM2perm} \citep{survRM2perm}. \cite{horiguchi2020permutation} discussed extensively different strategies on tackling the problem of possibly inestimable Kaplan--Meier-estimators for permuted data sets. Their numerical findings do not reveal a clear favorable method and all six studied strategies lead to comparable results. That is why we restricted ourselves here to the simple horizontal extension of the Kaplan--Meier curves, which corresponds to Method 2 in their paper and R-package. In detail, we set $\widehat S^\pi_i(u) = \widehat S^\pi_i(t)$ for all $u\in[t,\tau]$ when $S^\pi_i$ was just estimable up to $t<\tau$. 

The unstudentized permutation method, which relies on the assumption of exchangeable data, cannot be used to derive confidence intervals.    Consequently, just the asymptotic and studentized permutation methods were included in the respective comparisons.

The simulations were conducted by means of the computing environment {\it R} \citep{R}, version 3.6.1, generating $N_{\text{sim}}= 5,000$ simulation runs and $N_{\text{res}}=2,000$ resampling iterations for the two permutation procedures. Analogous to \cite{horiguchi2020permutation}, we regenerated the data whenever the Kaplan--Meier-estimator were not estimable, i.e. when at least for one group the largest observed time was censored and lied within $[0,\tau]$. The nominal significance level was set to  $\alpha=5\%$ and the end point of the time window was set to $\tau=10$.

\begin{table}[ht]
	\small
	\renewcommand{\arraystretch}{0.80}
	\centering
	\caption{Type-I error rates in $\%$ (nominal level $\alpha = 5\%$) for the asymptotic (Asym), the studentized permutation (st P) and the unstudentized permutation (un P) tests in Scenarios S1--S4. The values inside the binomial confidence interval [4.4$\%$, 5.6$\%$] are printed bold}\label{tab:type1_S1S4_new}
	\begin{tabular}{ccc ccc ccc ccc}
		\\ \toprule
		&&&  \multicolumn{3}{c}{$\mathbf{n} = K \cdot(24,16)$} &  \multicolumn{3}{c}{$\mathbf{n} = K \cdot(20,20)$} & \multicolumn{3}{c}{$\mathbf{n} = K \cdot(16,24)$}  \\
		\cmidrule(lr){4-6}\cmidrule(lr){7-9}\cmidrule(lr){10-12} Cens. Distr & Cens. rates & $K$ & Asym & st P & un P & Asym & st P & un P & Asym & st P & un P  \\
		\hline
		\\
		\multicolumn{12}{c}{\em S1 and S2: Exponential}\\ [6pt]
		%\hline
		Weib (uneq) & (7$\%$, 26$\%$) & 1 & 7.2 & \textbf{5.4} & 5.8  & 6.7 & \textbf{4.9} & \textbf{5.0}  & 6.2 & 4.3 & 4.2 \\ 
		&& 2 & 6.2 & \textbf{5.5} & 6.0 & 6.5 & \textbf{5.6} & 5.8 & \textbf{5.6} & \textbf{4.7} & \textbf{4.5} \\
		&& 4 & 6.0 & 5.7 & 6.1 & \textbf{5.3} & \textbf{4.9} & \textbf{5.1} & \textbf{5.4} & \textbf{4.9} & \textbf{4.5} \\		
		%\hline
		Unif (eq) & (20$\%$, 20$\%$) & 1 & 6.8 & \textbf{5.0} & \textbf{4.9} & 6.6 & \textbf{5.2} & \textbf{5.0}  & 7.0 & \textbf{4.8} & \textbf{4.9} \\ 
		&& 2 & 5.9 & \textbf{4.8} & \textbf{4.7}  & \textbf{5.5} & \textbf{4.8} & \textbf{4.8} & 6.9 & 5.8 & 5.8 \\
		&& 4 & \textbf{5.2} & \textbf{4.9} & \textbf{4.9} & \textbf{4.8} & \textbf{4.6} & \textbf{4.5} & \textbf{4.9} & \textbf{4.4} & \textbf{4.6} \\ 
		\\		
		\multicolumn{12}{c}{\em S3: Exponential vs. piece-wise Exponential (crossing curves) }\\ [6pt]
		%\hline
		Weib (uneq) & (7$\%$, 28$\%$) & 1 & 7.1 & \textbf{5.2} & 7.7  & 6.6 & \textbf{5.2} & \textbf{5.6}  & 6.6 & \textbf{4.6} & 4.2 \\
		&& 2 & 5.9 & \textbf{5.0} & 7.1  & 5.7 & \textbf{5.0} & \textbf{5.5} & \textbf{5.4} & \textbf{4.7} & 4.0 \\ 
		&& 4 & \textbf{4.9} & \textbf{4.5} & 7.1 & \textbf{5.2} & \textbf{5.1} & 5.8 & \textbf{4.9} & \textbf{4.6} & 3.9 \\ 	
		%\hline
		Unif (eq) & (20$\%$, 30$\%$) & 1 & 6.5 & \textbf{5.1} & \textbf{5.6}  & 6.6 & \textbf{5.3} & \textbf{5.2}  & 6.9 & \textbf{5.0} & 4.2 \\ 
		&& 2 & 6.3 & \textbf{5.3} & 6.4  & 6.0 & \textbf{5.2} & \textbf{5.2} & \textbf{5.6} & \textbf{4.9} & 4.0 \\ 
		&& 4 & \textbf{5.3} & \textbf{4.9} & 6.0 & 5.9 & \textbf{5.4} & \textbf{5.2} & \textbf{5.1} & \textbf{4.7} & 3.8 \\ 
		\\
		\multicolumn{12}{c}{\em S4: Lognormal}\\ [6pt]
		%\hline
		Weib (uneq) & (14$\%$,35$\%$) & 1 & 7.4 & \textbf{5.6} & 6.2  & 7.2 & \textbf{5.5} & 5.8  & 6.6 & \textbf{4.5} & 4.2 \\ 
		& & 2 & 6.3 & \textbf{5.4} & 6.3  & \textbf{5.5} & \textbf{4.8} & \textbf{4.9} & 5.9 & \textbf{5.0} & \textbf{4.7} \\ 
		& & 4 & \textbf{5.2} & \textbf{4.8} & 5.8 & \textbf{4.9} & \textbf{4.6} & \textbf{4.9} & \textbf{5.3} & \textbf{5.0} & \textbf{4.6} \\ 
		Unif (eq) & (33$\%$,33$\%$) & 1 & 7.1 & \textbf{5.1} & \textbf{5.4}  & 6.0 & \textbf{4.5} & 4.3  & 6.7 & \textbf{4.8} & \textbf{5.1} \\ 
		& & 2 & 5.9 & \textbf{5.1} & \textbf{5.3}  & 6.0 & \textbf{5.2} & \textbf{5.1} & 6.1 & \textbf{5.1} & \textbf{5.1} \\ 
		&& 4 & \textbf{5.4} & \textbf{4.9} & \textbf{4.9} & \textbf{5.0} & \textbf{4.7} & \textbf{4.6} & \textbf{5.2} & \textbf{4.9} & \textbf{5.1} \\ 
		\\
		\multicolumn{12}{c}{\em S5: Weibull (different shape)}\\ [6pt]
		%\hline
		Weib (uneq) & (8$\%$,38$\%$)  & 1 & 8.0 & 6.0 & 9.5  & 7.9 & 6.0 & 7.2  & 6.5 & \textbf{4.5} & 3.6 \\ 
		& & 2 & 6.2 & \textbf{5.3} & 8.9  & 6.6 & 5.9 & 6.9 & 6.3 & \textbf{5.4} & 4.1 \\
		& & 4 & 6.0 & \textbf{5.5} & 9.9 & \textbf{5.0} & \textbf{4.8} & 5.8 & \textbf{5.2} & \textbf{4.8} & 3.7 \\ 
		%\hline
		%
		Unif (eq) & (29$\%$,46$\%$) & 1 & 7.7 & 5.7 & 7.3  & 7.0 & \textbf{5.1} & \textbf{5.0}  & 6.3 & \textbf{4.7} & 3.4 \\
		& & 2 & 6.4 & \textbf{5.6} & 7.7  & 5.9 & \textbf{4.9} & \textbf{4.8} & 6.4 & \textbf{5.2} & 3.8 \\
		&& 4 & 5.7 & \textbf{5.3} & 7.1 & \textbf{5.2} & \textbf{4.9} & \textbf{4.7} & \textbf{5.0} & \textbf{4.6} & 3.2 \\ 
		\\
		\multicolumn{12}{c}{\em S6: Weibull (different scale)}\\ [6pt]
		%\hline
		Weib (uneq) & (8$\%$,35$\%$) & 1 & 7.9 & 6.0 & 8.8  & 7.3 & \textbf{5.6} & 6.5  & 6.3 & \textbf{4.6} & 4.0 \\ 
		& & 2 & 6.3 & \textbf{5.6} & 8.7  & 6.1 & \textbf{5.3} & 6.4 & \textbf{5.6} & \textbf{4.8} & 3.8 \\ 
		&& 4 & \textbf{5.5} & \textbf{5.1} & 8.6 & 5.7 & \textbf{5.2} & 6.4 & 5.9 & \textbf{5.5} & \textbf{4.5} \\ 
		%
		%\hline
		Unif (eq) & (29$\%$,35$\%$) & 1 & 7.7 & \textbf{5.6} & 6.8  & 6.7 & \textbf{5.2} & \textbf{5.2}  & 6.7 & \textbf{4.9} & 4.0 \\ 
		& & 2 & 6.6 & 5.9 & 6.9  & 5.9 & \textbf{5.2} & \textbf{5.1} & 5.8 & \textbf{5.0} & 4.0 \\
		&& 4 & \textbf{5.6} & \textbf{5.1} & 6.6 & \textbf{5.6} & \textbf{5.1} & \textbf{5.0} & \textbf{5.4} & \textbf{5.1} & 3.8 \\ 
		\\
		\multicolumn{12}{c}{\em S7: Weibull vs. piece-wise Exponential}\\ [6pt]
		%\hline
		Weib (uneq) & (7$\%$, 41$\%$) & 1 & 7.1 & \textbf{5.6} & 8.7  & 7.2 & \textbf{5.3} & 6.4  & 6.5 & \textbf{4.7} & 3.8 \\
		&& 2 & 6.2 & \textbf{5.3} & 9.4  & \textbf{5.5} & \textbf{5.1} & 6.1 & \textbf{5.6} & \textbf{4.9} & 3.8 \\
		&& 4 & 5.7 & \textbf{5.4} & 8.9 & \textbf{5.4} & \textbf{5.1} & 6.2 & \textbf{5.3} & \textbf{4.9} & 3.9 \\ 
		%\hline
		%
		Unif (eq) & (25$\%$, 47$\%$) & 1 & 7.2 & \textbf{5.5} & 6.8  & 6.6 & \textbf{5.1} & \textbf{4.6}  & 6.6 & \textbf{5.0} & 3.6 \\ 
		&& 2 & 5.7 & \textbf{5.1} & 6.4  & \textbf{5.6} & \textbf{5.1} & \textbf{5.0} & \textbf{5.6} & \textbf{4.7} & 3.5 \\
		&& 4 & 5.8 & \textbf{5.4} & 6.7 & \textbf{5.0} & \textbf{4.5} & \textbf{4.6} & \textbf{5.3} & \textbf{5.1} & 3.7 \\
		\bottomrule
	\end{tabular}
\end{table}

\subsection{Results}

The simulation results for the type-I error control are presented in Table~\ref{tab:type1_S1S4_new}. {Since the results for the two equally distributed censoring settings lead to the same conclusions, only Scenario C2 is included in the table and the results for C3 can be found in the supplement.} To judge the tests' performance, we recall that the $95\%$-confidence interval for the estimated sizes based on $N_{\text{sim}}=5,000$ simulation runs equals $[4.4\%,5.6\%]$ if the true type-I error coincides indeed with the nominal level $\alpha=5\%$. Having this at hand, it can readily be seen that the the asymptotic approach leads to rather liberal decisions. In 70 out of the 108 settings, the empirical type-I error rate was above the upper bound $5.6\%$ of the confidence interval $[4.4\%,5.6\%]$. The liberality is most pronounced under the small sample sizes settings with values up to $8.0\%$. However, the empirical sizes come closer to the nominal level when the sample sizes increased and in the majority of the largest sample size cases ($K=4$) the empirical size was inside the confidence interval $[4.4\%,5.6\%]$. These observations complement the findings of \cite{horiguchi2020permutation}, who considered only exchangeable and balanced settings. Therefore, the asymptotic test cannot be recommended for quite small sample sizes.

Switching to the unstudentized permutation test, the findings are diverse. Under Scenarios S1, S2 and S4 combined with equally distributed censoring (C2), the permutation keeps the nominal level very accurately. This observation is not surprising because these settings corresponds to a exchangeable data situation. However, this is not true anymore when unequal censoring is considered instead. While the unstudentized permutation test still controls the nominal level reasonable well for $\mathbf{n}=K(20,20)$ and $\mathbf{n}=K(16,24)$, it  exhibits a liberality for $\mathbf{n}=K\cdot(24,16)$ with empirical sizes around $6\%$. The empirical sizes under the remaining Scenarios S3, S5, S6 and S7 with crossing survival curves become even more unstable. For the equally distributed censoring setting, the unstudentized permutation test lead to rather liberal decision for $\mathbf{n}=K(24,16)$ with values up to $7.7\%$ and quite conservative decisions for $\mathbf{n}=K(16,24)$ with values reaching down to $3.2\%$. Moreover, it is apparent that the performance does not improve under these unbalance sample size settings even when the sample sizes increase. In contrast to these findings, the type-I error rate is again well preserved by the unstudentized permutation test under the balanced settings. However, this changes when we consider the censoring setting C1 with unequal censoring distributions. Here, the test exhibits a rather liberality under the Scenarios S5--S7 with values up to $7.2\%$ under the balanced sample size settings. The liberality is even more pronounced for $\mathbf{n}=K(24,16)$ with values even up to $9.5\%$. The conservativness under the other unbalanced setting, i.e. $\mathbf{n}=K(16,24)$, is now less pronounced but still present with values around $4\%$.

The overall instable type-I error performance can be explained by the systematically error mentioned in Section~\ref{sec:permutation}, which is caused by the difference between the variance of the test statistic and its permutation counterpart.  As motivated there, this can be fixed by studentization. The studentized permutation tests keep the type-I error rate in almost all settings inside the binomial confidence interval and have slight deviations outside. Overall it leads to the most stable results under the null hypothesis and we recommend its application whenever the sample sizes are rather small (e.g. $n_1+n_2<100$).

Due to limited space, the results of the power comparisons are deferred to the supplement. We summarize the findings of the comparison as follows. In most of the cases, the asymptotic tests leads to the highest power values, where the difference in power to the studentized permutation test even tend up to $4-5$ percentage points. The differences are most pronounced for the smaller sample size settings ($K=1,2$) and can be explained by the liberal behavior of the asymptotic test, which we observed under the null hypotheses.  Comparing the power results of the two permutation approaches, the power values are almost indistinguishable in most of the cases. However, partially the unstudentized permutation test lead to higher power values with a difference up to even $6$ percentage points and even the reverse, i.e. the studentized permutation has higher power values, can be observed. These diverse findings can be explained by the unstable type-I error control of the unstudentized permutation test with too liberal and too conservative decisions. Overall, the results need to be taken with a pinch of salt, because only the studentized permutation test exhibited a generally convincing performance under the null hypotheses.

We finally turn to the performance of the confidence intervals. We summarized the results for all seven distributional choices S1--S7, the three censoring distributions C1--C3 and the five different choices for $\delta\in\{0,0.5,1,1.5,2\}$ in Figure~\ref{fig:CI_diff}, for each of the nine different sample sizes. In total, each boxplot summarizes the results of 102 different settings; recall that S1 and S2 coincide under $\delta = 0$ and, thus, only S1 is considered in this case. It is apparent that the empirical coverage of the asymptotic test is liberal, similar to our findings regarding the type-I error control. The liberality or undercoverage is most pronounced for the small sample size cases $(K=1)$  and becomes less pronounced when the sample sizes increase. But even for the largest sample size settings $(K=4)$, the median empirical coverage is just slightly above the lower border $94.4\%$ of the binomial $95\%$-confidence interval $[94.4\%,95.6\%]$. In contrast, the permutation-based confidence intervals lead to more satisfactory results for all considered sample sizes and all boxes are clearly inside the $95\%$-confidence interval $[94.4\%,95.6\%]$, except for $\mathbf{n} =(24,16)$, where the lower end of box is slightly outside the confidence interval.

\begin{figure}
	\centering
	\includegraphics[width=\textwidth]{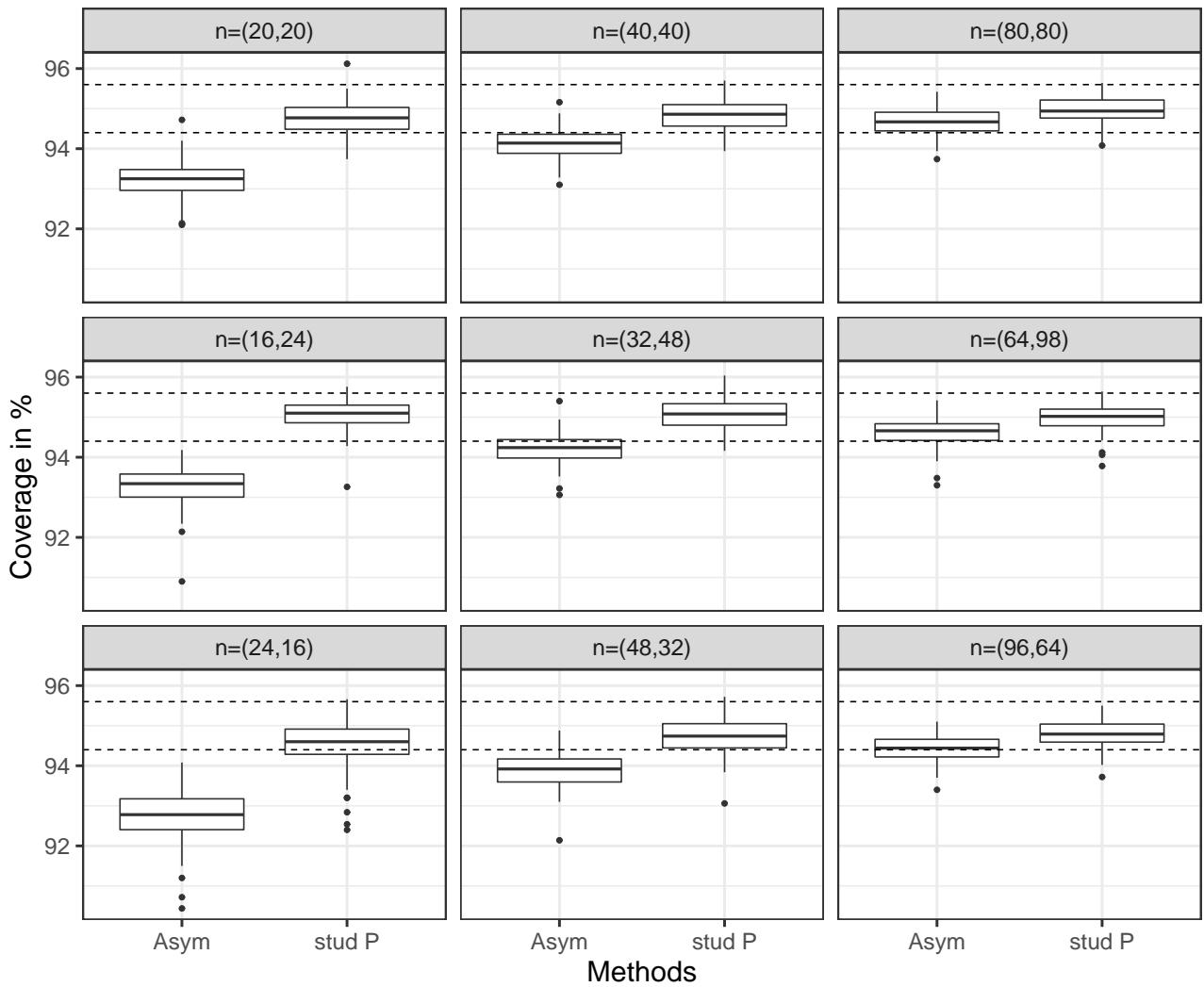}
	\caption{Coverage in $\%$ (nominal level $\alpha=5\%$) of the confidence intervals based on the asymptotic approximation (Asym) and the  studentized permutation approach (stud P). The dashed, horizontal lines represent the binomial 95$\%$-confidence interval $[94.4\%,95.6\%]$} \label{fig:CI_diff}
\end{figure}

In summary, we can only recommend the studentized permutation test and the corresponding permutation-based confidence intervals for the quantity $\mu_1-\mu_2$ for small sample sizes, as it leads to the most accurate type-I error and coverage control, respectively. Moreover, it can compete in terms of power with the other strategies whenever a comparison is fair and not influenced by liberal decisions under the null hypothesis.

\section{Real data example}\label{sec:data_example}

\begin{figure}
	\centering
	\begin{minipage}{0.9\textwidth}
		\includegraphics[width=\textwidth]{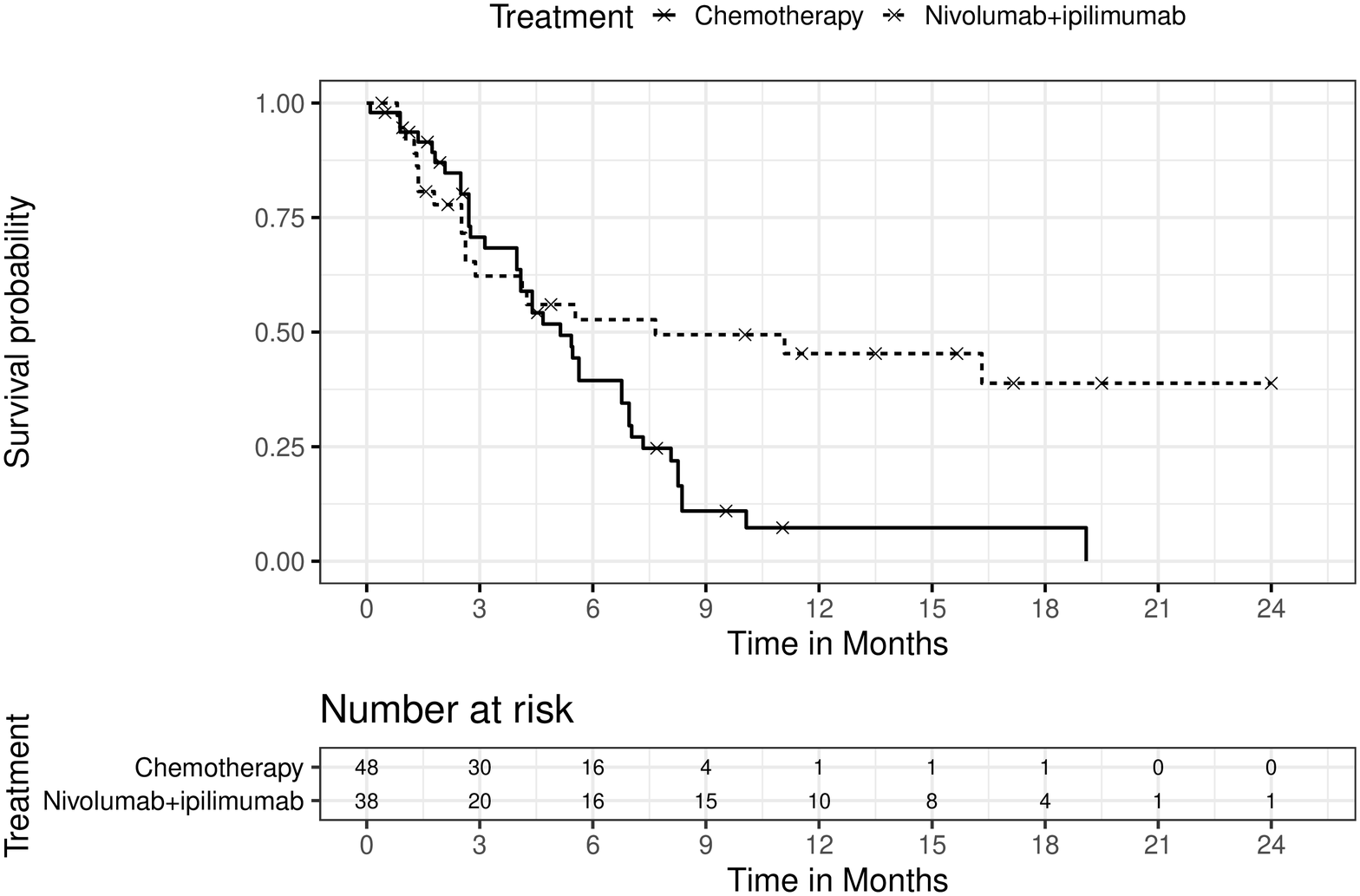}
	\end{minipage}
	\caption{Kaplan--Meier curves of the reconstructed data}\label{fig:data_example}
\end{figure}

To illustrate the presented permutation-based methods, we re-consider the data analysis of \cite{hellmann2018nivolumab}, who compared a combination treatment of nivolumab plus ipilimumab with chemotherapy among $299$ patients with non-small-cell lung cancer. Their study focused on patients with a high tumor mutational burden, i.e. at least ten mutations per megabase. And the study endpoint was progression-free survival. Since the present methods are designed for small sample sizes, we conduct a relevant subgroup analysis, which was also done by \cite{hellmann2018nivolumab}. In detail, we restrict to the patients having PD-L1 (tumor programmed death
ligand 1) expression of at least $1\%$. On the basis of the published Kaplan--Meier curves in \cite{hellmann2018nivolumab} and some additional information therein, e.g. the risk table, we reconstructed the individual patient data following the procedure of \cite{guyot2012enhanced}. The respective Kaplan--Meier curves of the two treatment groups are displayed in Figure~\ref{fig:data_example}. Therein, we can observe a delayed treatment effect of nivolumab plus ipilimumab. Thus, the assumption of proportional hazards is questionable and can even by formally rejected by the well established test of \cite{grambsch1994proportional} or the recent permutation-based proposal of \cite{ditzhaus2020bootstrap} (with 10,000 permutations). Both tests lead to a $p$-value less than $0.1\%$. That is why the original analysis of \cite{hellmann2018nivolumab} using the hazard ratio (HR: $0.48$ and $95\%$ CI $[0.27,0.85]$) need to be considered carefully. The RMSTs offer the possibility to interpret the treatment effect easily beyond the Cox model. The $p$-values of the asymptotic, studentized and unstudentized permutation tests (both based on 5,000 permutations), for inferring $\mathcal H_0:\mu_1=\mu_2$ are presented in Table~\ref{tab:data+examp+pval}. The confidence intervals for the difference $\mu_2-\mu_1$ as well as for the ratio $\mu_1/\mu_2$ are shown in Table~\ref{tab:data+examp+conf}. In both tables, the different end points $\tau\in\{12,15,18\}$ were considered. In practice, the end point needs to be chosen jointly with the physician regarding clinical relevance.

\begin{table}[t]
	\centering
	\caption{Testing RMST difference based on the asymptotic (Asym), the studentized (st P) and unstudentized (un P) tests for the reconstructed data}
	\label{tab:data+examp+pval}
	\begin{tabular}{cccccccccc}
		\\
		\toprule
		\\ [-10pt]
		&\multicolumn{3}{c}{$\tau = 12$ months} & \multicolumn{3}{c}{$\tau = 15$ months} & \multicolumn{3}{c}{$\tau = 18$ months} \\
		\cmidrule(lr){2-4}\cmidrule(lr){5-7}\cmidrule(lr){8-10}& Asym & un P & st P & Asym & un P & st P & Asym & un P & st P  \\
		\hline
		\\ [-6pt]
		$p$-values & 0.045 & 0.045 & 0.067 & 0.01 & 0.011 & 0.02 & 0.004 & 0.005 & 0.011\\
		\bottomrule
	\end{tabular}
	
\end{table}

\begin{table}[t]
	\centering
	\caption{Point estimates and $95\%$-confidence intervals of the difference $\mu_1-\mu_2$ and the ratio $\mu_1/\mu_2$, respectively, based on the asymptotic approximation (Asym) and the studentized permutation method. The first group is the chemotherapy group and the second the nivolumab plus ipilimumab group} \label{tab:data+examp+conf}
	\begin{tabular}{lcccccc}
		\\
		\toprule
		\\ [-10pt]
		&  \multicolumn{2}{c}{$\tau = 12$ months} & \multicolumn{2}{c}{$\tau = 15$ months} & \multicolumn{2}{c}{$\tau = 18$ months} \\
		\cmidrule(lr){2-3}\cmidrule(lr){4-5}\cmidrule(lr){6-7} & Asym & st P  & Asym & st P  & Asym & st P   \\
		\hline \\[-6pt]
		$\widehat\mu_{1}-\widehat\mu_{2}$ & \multicolumn{2}{c}{-1.85} & \multicolumn{2}{c}{-2.99} & \multicolumn{2}{c}{-4.02} \\ [6pt]
		$95\%$-CI &  $[\text{-}3.66,\text{-}0.04]$ & $[\text{-}3.83,0.13]$ & [$\text{-}5.28,\text{-}0.70 $] &  $[\text{-}5.53,\text{-}0.43]$ & $[\text{-}6.79,\text{-}1.26]$ & $[\text{-}7.09,\text{-}0.96]$ \\
		\\
		$\widehat\mu_{1}/\widehat\mu_{2}$ &  \multicolumn{2}{c}{0.75} & \multicolumn{2}{c}{0.65} & \multicolumn{2}{c}{0.59} \\ [6pt]
		$95\%$-CI & [0.57,0.98] & [0.56,1.01] & [0.48,0.88] & [0.47,0.91] & [0.43, 0.82] & [0.41,0.85]\\
		\bottomrule
	\end{tabular}
\end{table}

For $\tau=15$ and $\tau=18$, the results confirm the findings of \cite{hellmann2018nivolumab} that the combination nivolumab plus ipilimumab improves the progression-free time compared to the chemotherapy. The point estimates and confidence intervals in Table~\ref{tab:data+examp+conf} help to quantify the improvement and can be interpreted easily. For example, the combination treatment leads in average to a longer progression-free time of $4.02\pm 3.05$ months ($95\%$ confidence based on 5,000 permutations) compared to the chemotherapy over the first 1.5 years.

In general, it is observed that the asymptotic approach leads to smaller $p$-values and narrower confidence intervals than its permutation counterpart. Moreover, the unstudentized permutation test lead to comparable $p$-values than the asymptotic approach. As pointed out in Section~\ref{sec:sim}, the results of the asymptotic and unstudentized permutation test need to be considered carefully, especially for small and unbalanced sample sizes as having here. Thus, we would rather trust the results of the studentized permutation test than those of the other two, especially for $\tau=12$ months, where the decisions are diverse.

\section{Discussion and remarks}\label{sec:discussion}

In the last years, the RMST became an important part of the statistical toolbox for survival data. Various researchers \citep{stensrud2020test,trinquart2016comparison,a2016restricted} advise to use it, at least, as a complementary summary statistic, especially when the assumption of proportional hazards is in doubt. 
As raised by \cite{horiguchi2020permutation}, the type-I error rate of related asymptotic methods is inflated for small sample sizes. The permutation procedure of \cite{horiguchi2020permutation} as well as their detailed discussion of how to deal with inestimable Kaplan--Meier curves of the permutated data was an important step to solve that problem. However, their test's application is limited to exchangeable data settings and, in particular, to equal survival and censoring distributions, respectively. 

In this paper, we explained how studentization can tackle these limitations.  For the present survival two-sample comparison, it allows us to apply permutation tests even in non-exchangeable data situation, i.e. for different survival and/or censoring distributions, as well as to formulate corresponding confidence intervals for the quantity $\mu_1-\mu_2$ and $\mu_1/\mu_2$ of interest.
%
% and, moreover, can be used to derive permutation-based confidence intervals. 
%
Moreover, the control of the type-I error, which was the initial motivation for permutation tests, is not affected by the studentization strategy. Compared to their asymptotic counterparts, studentized permutation tests usually show a satisfactory type-I error control even for small sample sizes, as seen in Section~\ref{sec:sim}.

The theoretical justification of studentized permutation tests and respective confidence intervals is complemented by an extensive simulation study. The corresponding results support the usage of the developed methods for small sample sizes.

Our framework can be extended in various directions, e.g. to competing risks \citep{zhao2018estimating,lyu2020use}. More general study designs may be part of future research. For that purpose, we can follow \cite{dobler2020factorial} and \cite{ditzhaus2020inferring}, who recently discussed permutation-based inference for the concordance measure and median survival times, respectively, in the general context of factorial designs. Sample size determination can also be developed, in parallel to the asymptotic test based results \citep{TingYu2018}.

\section*{Acknowledgement}
Marc Ditzhaus was funded by the \textit{Deutsche Forschungsgemeinschaft} (grant no.  PA-2409 5-1). Moreover, the authors gratefully acknowledge the computing time provided on the Linux HPC cluster at TU Dortmund (LiDO3), partially funded in the course of the Large-Scale Equipment Initiative by the \textit{Deutsche Forschungsgemeinschaft} as project 271512359.

%\section*{Supplementary material}
%\label{SM}
%Supplementary material available online contains all proofs and the results of the power comparison.

\bibliographystyle{plainnat}
\bibliography{sample}

\newpage

\appendix

\section{Additional simulation results}
First, we present the results for the power comparison, see Tables~\ref{tab:power_exp}--\ref{tab:power_weibull}, and for the type-I error comparison in Table~\ref{tab:type1_supplement} under the remaining censoring setting C3, i.e. equally Weibull distributed censoring. The results were already briefly discussed in the main paper, including all relevant main conclusions. In addition to that, we present here the simulation results of the asymptotic and permutation-based confidence intervals for the ratio $\mu_1/\mu_2$ from Section~\ref{sec:ratio}. For the respective simulation study, we used the same set-up as in the simulation study for the difference-based methods from Section~\ref{sec:sim}. In particular, each boxplot in Figure~\ref{fig:CI_rat} summarizes the results for $102$ different settings. It is apparent that the performance of the asymptotic confidence interval is less extreme than the one for the differences. But for small sample sizes, here $\mathbf{n}=(20,20),(16,24),(24,16)$, it still leads to an undercoverage or liberal decisions. For the moderate sample sizes, the undercoverage almost vanishes and, for the large sample sizes, it completely vanishes with just a few exceptions. The permutation-based confidence intervals lead to more stable results, especially under $K\mathbf{n}_{\text{bal}},K\mathbf{n}_{\text{incr}},K\mathbf{n}_{\text{decr}}$ with $K=1,2$ and, thus, it is still our recommendation for sample sizes $n_1+n_2<100$.

\begin{table}[ht]
	\small
	\renewcommand{\arraystretch}{0.80}
	\centering
	\caption{Type-I error rates in $\%$ (nominal level $\alpha = 5\%$) for the asymptotic (Asym), the studentized permutation (st P) and the unstudentized permutation (un P) tests in Scenarios S1--S7 under equally Weibull distributed censoring, i.e. censoring setting C3. The values inside the binomial confidence interval [4.4$\%$, 5.6$\%$] are printed bold. }\label{tab:type1_supplement}
	\begin{tabular}{Hcc ccc ccc ccc}
		\\ \toprule 
		&&&  \multicolumn{3}{c}{$\mathbf{n} = K \cdot(24,16)$} &  \multicolumn{3}{c}{$\mathbf{n} = K \cdot(20,20)$} & \multicolumn{3}{c}{$\mathbf{n} = K \cdot(16,24)$}  \\
		\cmidrule(lr){4-6}\cmidrule(lr){7-9}\cmidrule(lr){10-12} 	Cens. Distr & Cens. rates & $K$ & Asym & st P & un P & Asym & st P & un P & Asym & st P & un P  \\
		\hline
		\\
		\multicolumn{12}{c}{\em S1 and S2: Exponential}\\ [6pt]
		%\hline
%		Weib (uneq) & (7$\%$, 26$\%$) & 1 & 7.2 & \textbf{5.4} & 5.8  & 6.7 & \textbf{4.9} & \textbf{5.0}  & 6.2 & 4.3 & 4.2 \\ 
%		&& 2 & 6.2 & \textbf{5.5} & 6.0 & 6.5 & \textbf{5.6} & 5.8 & \textbf{5.6} & \textbf{4.7} & \textbf{4.5} \\
%		&& 4 & 6.0 & 5.7 & 6.1 & \textbf{5.3} & \textbf{4.9} & \textbf{5.1} & \textbf{5.4} & \textbf{4.9} & \textbf{4.5} \\		
%		%\hline
%		Unif (eq) & (20$\%$, 20$\%$) & 1 & 6.8 & \textbf{5.0} & \textbf{4.9} & 6.6 & \textbf{5.2} & \textbf{5.0}  & 7.0 & \textbf{4.8} & \textbf{4.9} \\ 
%		&& 2 & 5.9 & \textbf{4.8} & \textbf{4.7}  & \textbf{5.5} & \textbf{4.8} & \textbf{4.8} & 6.9 & 5.8 & 5.8 \\
%		&& 4 & \textbf{5.2} & \textbf{4.9} & \textbf{4.9} & \textbf{4.8} & \textbf{4.6} & \textbf{4.5} & \textbf{4.9} & \textbf{4.4} & \textbf{4.6} \\ 
%		%\hline
		Weib (eq) & (11$\%$, 11$\%$) & 1 & 6.1 & \textbf{4.6} & \textbf{4.7}  & 6.4 & \textbf{5.0} & \textbf{5.0}  & 5.9 & \textbf{4.7} & \textbf{4.7} \\ 
		&& 2 & 5.8 & \textbf{4.9} & \textbf{5.0}  & 5.7 & \textbf{5.0} & \textbf{5.0} & \textbf{5.6} & \textbf{4.6} & \textbf{4.8} \\ 
		&& 4 & \textbf{5.5} & \textbf{5.0} & \textbf{5.1} & \textbf{4.7} & \textbf{4.4} & \textbf{4.5} & \textbf{5.4} & \textbf{5.0} & \textbf{5.0} \\ 
		%\hline
		\\		
		\multicolumn{12}{c}{\em S3: Exponential vs. piece-wise Exponential (crossing curves) }\\ [6pt]
		%\hline
%		Weib (uneq) & (7$\%$, 28$\%$) & 1 & 7.1 & \textbf{5.2} & 7.7  & 6.6 & \textbf{5.2} & \textbf{5.6}  & 6.6 & \textbf{4.6} & 4.2 \\
%		&& 2 & 5.9 & \textbf{5.0} & 7.1  & 5.7 & \textbf{5.0} & \textbf{5.5} & \textbf{5.4} & \textbf{4.7} & 4.0 \\ 
%		&& 4 & \textbf{4.9} & \textbf{4.5} & 7.1 & \textbf{5.2} & \textbf{5.1} & 5.8 & \textbf{4.9} & \textbf{4.6} & 3.9 \\ 	
%		%\hline
%		Unif (eq) & (20$\%$, 30$\%$) & 1 & 6.5 & \textbf{5.1} & \textbf{5.6}  & 6.6 & \textbf{5.3} & \textbf{5.2}  & 6.9 & \textbf{5.0} & 4.2 \\ 
%		&& 2 & 6.3 & \textbf{5.3} & 6.4  & 6.0 & \textbf{5.2} & \textbf{5.2} & \textbf{5.6} & \textbf{4.9} & 4.0 \\ 
%		&& 4 & \textbf{5.3} & \textbf{4.9} & 6.0 & 5.9 & \textbf{5.4} & \textbf{5.2} & \textbf{5.1} & \textbf{4.7} & 3.8 \\ 
%		%\hline
		Weib (eq) & (11$\%$, 27$\%$) & 1 & 7.7 & 5.9 & 7.2  & 6.3 & \textbf{4.9} & \textbf{5.0}  & 6.4 & \textbf{4.7} & 3.9 \\ 
		&& 2 & 6.1 & \textbf{5.5} & 6.7  & \textbf{5.3} & \textbf{4.8} & \textbf{4.9} & 5.9 & \textbf{5.1} & 3.8 \\ 
		&& 4 & 5.7 & \textbf{5.3} & 6.6 & 5.8 & \textbf{5.4} & \textbf{5.4} & 6.1 & 5.9 & \textbf{4.7} \\
		%\hline
		%\hline
		\\
		\multicolumn{12}{c}{\em S4: Lognormal}\\ [6pt]
%		%\hline
%		Weib (uneq) & (14$\%$,35$\%$) & 1 & 7.4 & \textbf{5.6} & 6.2  & 7.2 & \textbf{5.5} & 5.8  & 6.6 & \textbf{4.5} & 4.2 \\ 
%		& & 2 & 6.3 & \textbf{5.4} & 6.3  & \textbf{5.5} & \textbf{4.8} & \textbf{4.9} & 5.9 & \textbf{5.0} & \textbf{4.7} \\ 
%		& & 4 & \textbf{5.2} & \textbf{4.8} & 5.8 & \textbf{4.9} & \textbf{4.6} & \textbf{4.9} & \textbf{5.3} & \textbf{5.0} & \textbf{4.6} \\ 
%		%
%		%\hline
%		Unif (eq) & (33$\%$,33$\%$) & 1 & 7.1 & \textbf{5.1} & \textbf{5.4}  & 6.0 & \textbf{4.5} & 4.3  & 6.7 & \textbf{4.8} & \textbf{5.1} \\ 
%		& & 2 & 5.9 & \textbf{5.1} & \textbf{5.3}  & 6.0 & \textbf{5.2} & \textbf{5.1} & 6.1 & \textbf{5.1} & \textbf{5.1} \\ 
%		&& 4 & \textbf{5.4} & \textbf{4.9} & \textbf{4.9} & \textbf{5.0} & \textbf{4.7} & \textbf{4.6} & \textbf{5.2} & \textbf{4.9} & \textbf{5.1} \\ 
%		%
%		%\hline
		Weib (eq) & (21$\%$,21$\%$) & 1 & 6.3 & \textbf{5.1} & \textbf{4.8}  & 5.9 & \textbf{4.6} & \textbf{4.8}  & 7.5 & \textbf{5.5} & \textbf{5.6} \\
		& & 2 & 5.8 & \textbf{4.9} & \textbf{5.2}  & \textbf{5.4} & \textbf{4.7} & \textbf{4.8} & \textbf{5.3} & \textbf{4.7} & \textbf{4.7} \\ 
		&& 4 & \textbf{5.2} & \textbf{4.7} & \textbf{4.8} & \textbf{5.4} & \textbf{5.1} & \textbf{5.1} & \textbf{5.3} & \textbf{5.0} & \textbf{4.9} \\ 
			\\
			\multicolumn{12}{c}{\em S5: Weibull (different shape)}\\ [6pt]
			%\hline
			%		Weib (uneq) & (8$\%$,38$\%$)  & 1 & 8.0 & 6.0 & 9.5  & 7.9 & 6.0 & 7.2  & 6.5 & \textbf{4.5} & 3.6 \\ 
			%		& & 2 & 6.2 & \textbf{5.3} & 8.9  & 6.6 & 5.9 & 6.9 & 6.3 & \textbf{5.4} & 4.1 \\
			%		& & 4 & 6.0 & \textbf{5.5} & 9.9 & \textbf{5.0} & \textbf{4.8} & 5.8 & \textbf{5.2} & \textbf{4.8} & 3.7 \\ 
			%		%\hline
			%		%
			%		Unif (eq) & (29$\%$,46$\%$) & 1 & 7.7 & 5.7 & 7.3  & 7.0 & \textbf{5.1} & \textbf{5.0}  & 6.3 & \textbf{4.7} & 3.4 \\
			%		& & 2 & 6.4 & \textbf{5.6} & 7.7  & 5.9 & \textbf{4.9} & \textbf{4.8} & 6.4 & \textbf{5.2} & 3.8 \\
			%		&& 4 & 5.7 & \textbf{5.3} & 7.1 & \textbf{5.2} & \textbf{4.9} & \textbf{4.7} & \textbf{5.0} & \textbf{4.6} & 3.2 \\ 
			%		%\hline
			%		%
			Weib (eq) & (13$\%$,40$\%$) & 1 & 7.5 & 5.7 & 7.7  & 7.2 & 5.7 & \textbf{5.6}  & 6.2 & \textbf{4.6} & 3.2 \\ 
			& & 2 & 5.9 & \textbf{4.9} & 7.1  & \textbf{5.6} & \textbf{5.2} & \textbf{4.9} & 5.8 & \textbf{5.0} & 3.5 \\ 
			&& 4 & \textbf{5.3} & \textbf{5.0} & 6.9 & \textbf{5.1} & \textbf{4.7} & \textbf{4.9} & \textbf{5.3} & \textbf{5.0} & 3.3 \\
			\\
			\multicolumn{12}{c}{\em S6: Weibull (different scale)}\\ [6pt]
			%\hline
			%		Weib (uneq) & (8$\%$,35$\%$) & 1 & 7.9 & 6.0 & 8.8  & 7.3 & \textbf{5.6} & 6.5  & 6.3 & \textbf{4.6} & 4.0 \\ 
			%		& & 2 & 6.3 & \textbf{5.6} & 8.7  & 6.1 & \textbf{5.3} & 6.4 & \textbf{5.6} & \textbf{4.8} & 3.8 \\ 
			%		&& 4 & \textbf{5.5} & \textbf{5.1} & 8.6 & 5.7 & \textbf{5.2} & 6.4 & 5.9 & \textbf{5.5} & \textbf{4.5} \\ 
			%		%
			%		%\hline
			%		Unif (eq) & (29$\%$,35$\%$) & 1 & 7.7 & \textbf{5.6} & 6.8  & 6.7 & \textbf{5.2} & \textbf{5.2}  & 6.7 & \textbf{4.9} & 4.0 \\ 
			%		& & 2 & 6.6 & 5.9 & 6.9  & 5.9 & \textbf{5.2} & \textbf{5.1} & 5.8 & \textbf{5.0} & 4.0 \\
			%		&& 4 & \textbf{5.6} & \textbf{5.1} & 6.6 & \textbf{5.6} & \textbf{5.1} & \textbf{5.0} & \textbf{5.4} & \textbf{5.1} & 3.8 \\ 
			%		%
			%\hline
			Weib (eq) & (13$\%$,26$\%$) & 1 & 6.6 & \textbf{4.9} & 6.0 & 6.1 & \textbf{4.8} & \textbf{4.8}  & 6.4 & \textbf{4.8} & 3.7 \\ 
			& & 2 & 6.6 & 5.9 & 7.0 & \textbf{5.3} & \textbf{4.6} & \textbf{4.6} & \textbf{5.5} & \textbf{4.6} & 3.5 \\ 
			&& 4 & \textbf{5.3} & \textbf{5.0} & 6.4 & \textbf{5.3} & \textbf{5.0} & \textbf{5.1} & \textbf{5.0} & \textbf{4.9} & 3.6 \\ 
			\\
			\multicolumn{12}{c}{\em S7: Weibull vs. piece-wise Exponential}\\ [6pt]
			%		\hline
			%		Weib (uneq) & (7$\%$, 41$\%$) & 1 & 7.1 & \textbf{5.6} & 8.7  & 7.2 & \textbf{5.3} & 6.4  & 6.5 & \textbf{4.7} & 3.8 \\
			%		&& 2 & 6.2 & \textbf{5.3} & 9.4  & \textbf{5.5} & \textbf{5.1} & 6.1 & \textbf{5.6} & \textbf{4.9} & 3.8 \\
			%		&& 4 & 5.7 & \textbf{5.4} & 8.9 & \textbf{5.4} & \textbf{5.1} & 6.2 & \textbf{5.3} & \textbf{4.9} & 3.9 \\ 
			%		\hline
			%		
			%		Unif (eq) & (25$\%$, 47$\%$) & 1 & 7.2 & \textbf{5.5} & 6.8  & 6.6 & \textbf{5.1} & \textbf{4.6}  & 6.6 & \textbf{5.0} & 3.6 \\ 
			%		&& 2 & 5.7 & \textbf{5.1} & 6.4  & \textbf{5.6} & \textbf{5.1} & \textbf{5.0} & \textbf{5.6} & \textbf{4.7} & 3.5 \\
			%		&& 4 & 5.8 & \textbf{5.4} & 6.7 & \textbf{5.0} & \textbf{4.5} & \textbf{4.6} & \textbf{5.3} & \textbf{5.1} & 3.7 \\
			%		\hline
			%		
			Weib (eq) & (11$\%$, 43$\%$) & 1 & 6.5 & \textbf{4.7} & 6.4  & 7.1 & \textbf{5.6} & \textbf{5.6}  & 6.2 & \textbf{4.9} & 3.5 \\
			&& 2 & 6.7 & 6.1 & 7.5  & 5.9 & \textbf{5.2} & \textbf{5.2} & \textbf{5.3} & \textbf{4.5} & 3.0 \\ 
			&& 4 & \textbf{4.9} & \textbf{4.6} & 6.3 & \textbf{5.2} & \textbf{5.0} & \textbf{5.1} & \textbf{5.2} & \textbf{4.8} & 3.7 \\ 
			\bottomrule
	\end{tabular}
\end{table}

\begin{table}[ht!]
	\small
	\renewcommand{\arraystretch}{0.80}
	\caption{Power values in $\%$ (nominal level $\alpha = 5\%$) under the alternative $\mu_2 -\mu_1 = \delta \in\{1,2\}$ for the asymptotic (Asym), the studentized permutation (st P) and the unstudentized permutation (un P) tests in Scenarios S1, S2 and S3.}  \label{tab:power_exp}
	\centering
	\begin{tabular}{ccccccccccccc}
		\\ \toprule
		&&&&  \multicolumn{3}{c}{$\mathbf{n} = K \cdot(24,16)$} & \multicolumn{3}{c}{$\mathbf{n} = K \cdot(16,24)$} &  \multicolumn{3}{c}{$\mathbf{n} = K \cdot(20,20)$} \\
	\cmidrule(lr){5-7}\cmidrule(lr){8-10}\cmidrule(lr){11-13} 	Cens. Distr & $\delta$ & Cens. rates & $K$ & Asym & st P & un P & Asym & st P & un P & Asym & st P & un P  \\
		\hline
		\\
		\multicolumn{13}{c}{\em S1: Exponential (proportional hazards)}\\ [6pt]
		%\hline
		Weib (uneq) & $\delta =1$ & (7$\%$,30$\%$) & 1 & 15.3 & 12.2 & 15.1 & 15.3 & 12.5 & 13.6 & 16.1 & 12.6 & 12.2 \\ 
		&& & 2 & 23.3 & 21.4 & 24.6 & 23.8 & 21.8 & 22.9 & 23.8 & 21.6 & 20.6 \\
		&& & 4 & 38.9 & 37.4 & 41.7 & 41.2 & 40.0 & 41.4 & 41.1 & 39.4 & 38.1 \\ 
		& $\delta = 2$ & (7$\%$,34$\%$) & 1 & 40.6 & 35.2 & 40.2 & 42.3 & 37.7 & 39.0 & 42.8 & 37.0 & 36.2 \\ 
		&& & 2 & 64.0 & 61.4 & 65.9 & 69.2 & 67.1 & 68.3 & 67.3 & 65.0 & 63.9 \\ 
		&& & 4 & 90.1 & 89.4 & 91.4 & 92.5 & 92.2 & 92.5 & 92.4 & 91.9 & 91.2 \\ 
		Unif (eq) & $\delta =1$ & (20$\%$,27$\%$) & 1 & 16.0 & 12.3 & 13.7 & 16.3 & 13.8 & 13.8 & 18.1 & 14.2 & 13.3 \\ 
		&& & 2 & 24.1 & 22.0 & 23.3 & 25.7 & 23.6 & 23.5 & 25.3 & 23.1 & 22.3 \\ 
		&& & 4 & 41.3 & 40.3 & 41.4 & 41.5 & 40.4 & 40.4 & 42.4 & 41.4 & 40.0 \\ 
		& $\delta =2$ & (20$\%$,37$\%$) & 1 & 41.7 & 35.4 & 38.0 & 43.6 & 38.5 & 38.7 & 42.2 & 36.9 & 36.1 \\ 
		&& & 2 & 65.8 & 63.5 & 64.9 & 69.9 & 68.0 & 67.9 & 67.8 & 65.2 & 64.7 \\ 
		&& & 4 & 91.2 & 90.6 & 91.1 & 92.8 & 92.4 & 92.3 & 91.6 & 91.2 & 91.1 \\ 
		Weib (eq) & $\delta =1$ & (11$\%$,19$\%$) & 1 & 16.8 & 13.5 & 15.0 & 16.1 & 13.7 & 13.7 & 17.5 & 14.5 & 13.7 \\ 
		&&& 2 & 26.0 & 24.1 & 25.3 & 26.6 & 24.8 & 24.9 & 25.3 & 23.3 & 22.8 \\
		&& & 4 & 41.6 & 40.8 & 41.7 & 46.1 & 44.8 & 44.9 & 44.6 & 43.3 & 42.4 \\ 
		& $\delta =2$ & (11$\%$,29$\%$) & 1 & 44.9 & 40.3 & 42.1 & 47.0 & 42.0 & 42.1 & 44.9 & 40.1 & 39.2 \\ 
		&& & 2 & 71.6 & 69.4 & 70.9 & 72.4 & 70.6 & 70.3 & 72.2 & 69.6 & 69.0 \\ 
		&& & 4 & 93.4 & 93.0 & 93.5 & 95.1 & 95.0 & 95.0 & 94.2 & 93.8 & 93.4 \\		
		%\hline
		\\
		\multicolumn{13}{c}{\em S2: Exponential (late departures)}\\ [6pt]
		%\hline
		Weib (uneq) & $\delta =1$ & (7$\%$,30$\%$) & 1 & 13.8 & 10.9 & 14.2 & 14.8 & 12.1 & 13.3 & 16.2 & 12.8 & 11.5 \\
		&& & 2 & 20.9 & 19.1 & 23.8 & 21.7 & 20.0 & 21.4 & 23.2 & 21.2 & 19.5 \\ 
		&& & 4 & 36.1 & 34.7 & 40.2 & 39.6 & 38.5 & 40.3 & 38.9 & 37.9 & 35.5 \\ 
		& $\delta = 2$ & (7$\%$,39$\%$) & 1 & 34.8 & 30.2 & 36.6 & 37.8 & 32.8 & 34.7 & 38.1 & 32.7 & 30.1 \\ 
		&& & 2 & 56.1 & 53.2 & 60.3 & 61.7 & 59.4 & 61.2 & 62.7 & 59.6 & 57.2 \\ 
		&& & 4 & 84.8 & 84.0 & 87.9 & 88.3 & 87.7 & 88.6 & 87.7 & 87.1 & 85.6 \\ 
		Unif (eq) & $\delta =1$ & (20$\%$,31$\%$) & 1 & 14.8 & 11.2 & 13.1 & 15.1 & 12.3 & 12.3 & 15.8 & 12.9 & 11.1 \\ 
		&& & 2 & 22.0 & 20.3 & 22.2 & 23.8 & 21.8 & 22.2 & 24.2 & 22.0 & 19.9 \\ 
		&& & 4 & 36.5 & 35.3 & 37.3 & 39.4 & 38.2 & 38.1 & 37.8 & 36.6 & 34.1 \\ 
		& $\delta =2$ & (20$\%$,49$\%$) & 1 & 36.8 & 31.3 & 34.5 & 38.3 & 33.9 & 33.4 & 38.6 & 33.5 & 30.7 \\ 
		&& & 2 & 59.5 & 56.9 & 59.8 & 62.5 & 60.1 & 59.8 & 61.8 & 59.4 & 56.2 \\ 
		&& & 4 & 85.7 & 84.8 & 86.4 & 88.6 & 87.9 & 88.0 & 89.4 & 88.6 & 86.6 \\ 
		Weib (eq) & $\delta =1$ & (11$\%$,24$\%$) & 1 & 15.4 & 12.5 & 14.6 & 16.2 & 13.4 & 13.3 & 15.7 & 12.8 & 11.5 \\
		&&& 2 & 22.6 & 20.6 & 23.0 & 24.3 & 22.3 & 22.6 & 24.8 & 23.0 & 20.6 \\
		&& & 4 & 38.8 & 37.5 & 40.0 & 42.1 & 41.1 & 41.0 & 42.2 & 40.9 & 38.7 \\ 
		& $\delta =2$ & (11$\%$,46$\%$) & 1 & 38.4 & 33.8 & 37.2 & 41.0 & 36.8 & 36.8 & 40.8 & 36.4 & 33.5 \\ 
		&& & 2 & 62.4 & 60.1 & 63.3 & 65.8 & 63.9 & 63.8 & 66.4 & 64.1 & 61.1 \\ 
		&& & 4 & 88.3 & 87.8 & 89.4 & 90.7 & 90.0 & 90.1 & 91.2 & 90.7 & 89.0 \\		
		%\hline
		\\
		\multicolumn{13}{c}{\em S3: Exponential vs. piece-wise Exponential (crossing curves) }\\ [6pt]
		%\hline
		Weib (uneq) & $\delta =1$ & (7$\%$,32$\%$) & 1 & 14.3 & 11.5 & 15.1 & 14.3 & 11.8 & 12.6 & 15.1 & 12.0 & 10.0 \\ 
		&& & 2 & 18.7 & 16.9 & 21.9 & 21.2 & 19.4 & 21.1 & 21.6 & 19.5 & 17.3 \\
		&& & 4 & 31.0 & 29.8 & 37.1 & 32.6 & 31.7 & 33.2 & 35.3 & 34.2 & 30.9 \\ 
		& $\delta=2$ & (7$\%$,36$\%$) & 1 & 36.3 & 31.4 & 37.3 & 38.6 & 33.6 & 35.1 & 38.9 & 33.4 & 30.9 \\
		&& & 2 & 57.7 & 54.9 & 61.9 & 61.1 & 58.6 & 60.0 & 61.7 & 59.2 & 55.8 \\
		&& & 4 & 85.1 & 84.7 & 87.5 & 87.5 & 86.8 & 87.5 & 88.6 & 88.0 & 86.3 \\
		%\hline
		Unif  (eq)& $\delta=1$ & (20$\%$,38$\%$) & 1 & 13.7 & 10.8 & 12.5 & 14.3 & 11.4 & 11.2 & 15.8 & 13.1 & 10.8 \\
		&&  & 2 & 20.7 & 18.8 & 21.5 & 20.5 & 18.6 & 18.4 & 21.6 & 19.7 & 16.4 \\ 
		&&  & 4 & 34.1 & 32.8 & 36.5 & 34.6 & 33.6 & 33.2 & 36.2 & 34.9 & 30.9 \\
		& $\delta=2$ & (20$\%$,46$\%$) & 1 & 36.8 & 31.4 & 34.9 & 38.6 & 33.9 & 33.7 & 38.2 & 33.4 & 30.6 \\ 
		& &  & 2 & 59.8 & 56.8 & 60.1 & 62.2 & 59.7 & 59.6 & 62.5 & 59.7 & 56.7 \\
		&& & 4 & 86.5 & 85.7 & 87.5 & 89.3 & 88.8 & 88.5 & 88.5 & 87.7 & 86.0 \\ 
		%\hline
		Weib (eq) & $\delta=1$ & (11$\%$,35$\%$) & 1 & 14.3 & 11.9 & 14.0 & 14.9 & 12.6 & 12.6 & 14.7 & 11.9 & 9.8 \\ 
		&&& 2 & 20.9 & 19.1 & 22.0 & 22.3 & 20.8 & 20.9 & 23.4 & 21.8 & 18.3 \\ 
		&& & 4 & 33.5 & 32.7 & 36.7 & 38.5 & 37.7 & 37.7 & 37.3 & 36.2 & 32.4 \\ 
		& $\delta=2$ & (11$\%$,42$\%$) & 1 & 38.0 & 33.0 & 36.7 & 39.5 & 35.3 & 35.4 & 41.5 & 36.5 & 33.8 \\ 
		& && 2 & 61.6 & 58.9 & 62.6 & 64.2 & 62.0 & 62.3 & 65.5 & 63.0 & 59.6 \\
		&& & 4 & 88.5 & 87.9 & 89.9 & 91.3 & 90.9 & 90.8 & 90.4 & 89.9 & 88.2 \\ 
		\bottomrule
	\end{tabular}
\end{table}

\begin{table}[ht!]
	\small
	\renewcommand{\arraystretch}{0.8}
	\caption{Power values in $\%$ (nominal level $\alpha = 5\%$) under the alternative $\mu_2 -\mu_1 = \delta \in\{1,2\}$ for the asymptotic (Asym), the studentized permutation (st P) and the unstudentized permutation (un P) tests in Scenarios S4 and S5.}\label{tab:power_lognormal}	
	\centering
	\begin{tabular}{ccccccccccccc}
		\\ \toprule
		&&&&  \multicolumn{3}{c}{$\mathbf{n} = K \cdot(24,16)$} & \multicolumn{3}{c}{$\mathbf{n} = K \cdot(16,24)$} &  \multicolumn{3}{c}{$\mathbf{n} = K \cdot(20,20)$} \\
		\cmidrule(lr){5-7}\cmidrule(lr){8-10}\cmidrule(lr){11-13} Cens. Distr & $\delta$ & Cens. rates & $K$ & Asym & st P & un P & Asym & st P & un P & Asym & st P & un P  \\
		\hline
		\\
		\multicolumn{13}{c}{\em S4: Lognormal (scale alternatives)}\\ [6pt]
		%\hline
		Weib (uneq) & $\delta =1$ & (14$\%$,39$\%$) & 1 & 30.2 & 25.5 & 24.2 & 28.4 & 24.1 & 23.1 & 27.3 & 22.5 & 22.5 \\ 
		&& & 2 & 45.7 & 42.9 & 43.5 & 45.3 & 42.8 & 42.2 & 45.1 & 41.5 & 40.9 \\ 
		&& & 4 & 70.6 & 69.5 & 70.6 & 72.0 & 71.1 & 71.3 & 71.8 & 70.6 & 69.8 \\ 
		& $\delta = 2$ & (12$\%$,44$\%$) & 1 & 83.0 & 79.1 & 76.7 & 84.5 & 80.8 & 79.2 & 82.5 & 76.9 & 77.9 \\ 
		&& & 2 & 98.2 & 97.7 & 97.6 & 98.5 & 98.2 & 98.0 & 98.1 & 97.6 & 97.6 \\ 
		&& & 4 & 100.0 & 100.0 & 100.0 & 100.0 & 100.0 & 100.0 & 100.0 & 100.0 & 100.0 \\ 
		Unif (eq) & $\delta =1$ & (33$\%$,42$\%$) & 1 & 27.3 & 22.1 & 20.6 & 26.5 & 22.2 & 22.1 & 25.5 & 20.4 & 22.9 \\ 
		&& & 2 & 42.7 & 39.9 & 38.1 & 43.2 & 40.6 & 40.3 & 42.5 & 38.9 & 41.2 \\ 
		&& & 4 & 69.9 & 68.7 & 67.6 & 71.6 & 70.5 & 70.3 & 68.6 & 67.3 & 68.8 \\ 
		& $\delta =2$ & (33$\%$,56$\%$) & 1 & 81.7 & 77.2 & 74.2 & 81.4 & 77.3 & 77.5 & 76.6 & 70.2 & 76.7 \\ 
		&& & 2 & 97.7 & 97.1 & 96.5 & 97.8 & 97.4 & 97.5 & 96.4 & 95.5 & 96.8 \\
		&& & 4 & 100.0 & 100.0 & 100.0 & 100.0 & 100.0 & 100.0 & 100.0 & 100.0 & 100.0 \\
		Weib (eq) & $\delta =1$ & (21$\%$,32$\%$) & 1 & 32.8 & 28.5 & 26.2 & 30.8 & 27.2 & 27.2 & 29.3 & 24.5 & 27.4 \\ 
		&&& 2 & 50.4 & 47.6 & 45.5 & 50.4 & 48.0 & 48.3 & 45.8 & 43.5 & 45.8 \\ 
		&& & 4 & 78.2 & 77.0 & 75.9 & 78.2 & 77.3 & 77.5 & 75.5 & 74.8 & 75.9 \\
		& $\delta =2$ & (21$\%$,52$\%$) & 1 & 87.8 & 84.6 & 82.2 & 87.5 & 84.4 & 84.2 & 84.5 & 80.7 & 85.1 \\ 
		&& & 2 & 99.2 & 99.1 & 98.9 & 99.3 & 99.2 & 99.1 & 98.8 & 98.5 & 98.9 \\ 
		&& & 4 & 100.0 & 100.0 & 100.0 & 100.0 & 100.0 & 100.0 & 100.0 & 100.0 & 100.0 \\ 
		%\hline
		\\
		\multicolumn{13}{c}{\em S5: Weibull (different shape)}\\ [6pt]
		%\hline
		Weib (uneq) & $\delta =1$ & (8$\%$,40$\%$) & 1 & 25.8 & 22.4 & 25.7 & 25.0 & 22.0 & 22.7 & 24.5 & 20.3 & 19.4 \\ 
		&& & 2 & 35.7 & 33.4 & 39.7 & 38.5 & 36.3 & 38.1 & 37.1 & 35.0 & 32.8 \\
		&& & 4 & 55.7 & 54.7 & 62.0 & 60.1 & 59.1 & 61.8 & 62.5 & 60.8 & 57.9 \\
		& $\delta = 2$ & (8$\%$,42$\%$) & 1 & 75.4 & 71.5 & 73.3 & 77.2 & 73.2 & 73.1 & 77.8 & 72.7 & 72.3 \\
		&& & 2 & 94.9 & 94.1 & 95.4 & 96.0 & 95.6 & 95.8 & 96.1 & 95.3 & 95.2 \\
		&& & 4 & 99.9 & 99.8 & 99.9 & 99.9 & 99.9 & 99.9 & 100.0 & 99.9 & 99.9 \\ 
		Unif (eq) & $\delta =1$ & (29$\%$,48$\%$) & 1 & 25.0 & 20.5 & 21.6 & 25.2 & 20.6 & 20.6 & 23.9 & 19.4 & 18.1 \\
		&& & 2 & 35.8 & 33.3 & 35.2 & 37.9 & 35.3 & 34.8 & 38.4 & 35.5 & 33.6 \\ 
		&& & 4 & 59.0 & 57.7 & 60.4 & 60.9 & 59.8 & 59.2 & 61.2 & 59.9 & 56.3 \\ 
		& $\delta =2$ & (29$\%$,50$\%$) & 1 & 75.5 & 70.6 & 70.0 & 77.2 & 72.9 & 72.8 & 74.5 & 67.7 & 70.3 \\
		&& & 2 & 94.3 & 93.5 & 93.7 & 96.3 & 95.6 & 95.8 & 95.6 & 94.6 & 94.7 \\ 
		&& & 4 & 99.9 & 99.9 & 99.9 & 99.9 & 99.9 & 99.9 & 100.0 & 100.0 & 100.0 \\ 
		Weib (eq) & $\delta =1$ & (13$\%$,42$\%$) & 1 & 26.4 & 22.9 & 23.9 & 27.5 & 24.2 & 24.0 & 26.1 & 22.4 & 21.3 \\ 
		&&& 2 & 40.2 & 37.9 & 40.8 & 40.3 & 38.2 & 38.1 & 43.0 & 40.9 & 37.9 \\ 
		&& & 4 & 63.0 & 61.6 & 65.0 & 68.3 & 67.4 & 67.4 & 67.9 & 66.9 & 63.4 \\ 
		& $\delta =2$ & (13$\%$,44$\%$) & 1 & 81.0 & 77.3 & 77.4 & 81.6 & 78.4 & 78.3 & 82.2 & 78.5 & 79.3 \\ 
		&& & 2 & 97.5 & 96.9 & 97.2 & 98.0 & 97.6 & 97.7 & 97.9 & 97.7 & 97.6 \\ 
		&& & 4 & 100.0 & 100.0 & 100.0 & 100.0 & 100.0 & 100.0 & 100.0 & 100.0 & 100.0 \\ 
		%\hline
		\bottomrule
	\end{tabular}
\end{table}

\begin{table}[ht!]
	\small
	\renewcommand{\arraystretch}{0.8}
	\centering
	\caption{Power values in $\%$ (nominal level $\alpha = 5\%$) under the alternative $\mu_2 -\mu_1 = \delta \in\{1,2\}$ for the asymptotic (Asym), our studentized permutation (st P) and the unstudentized permutation (un P) tests in Scenarios S6 and S7. }\label{tab:power_weibull}
	\begin{tabular}{ccccccccccccc}
		\\ \toprule
		&&&&  \multicolumn{3}{c}{$\mathbf{n} = K \cdot(24,16)$} & \multicolumn{3}{c}{$\mathbf{n} = K \cdot(16,24)$} &  \multicolumn{3}{c}{$\mathbf{n} = K \cdot(20,20)$} \\
		\cmidrule(lr){5-7}\cmidrule(lr){8-10}\cmidrule(lr){11-13} Cens. Distr & $\delta$ & Cens. rates & $K$ & Asym & st P & un P & Asym & st P & un P & Asym & st P & un P  \\
		\hline
		\\
		\multicolumn{13}{c}{\em  S6: Weibull (different scale)}\\ [6pt]
		%\hline
		Weib (uneq) & $\delta =1$ & (8$\%$,40$\%$) & 1 & 25.3 & 21.7 & 25.5 & 24.8 & 21.2 & 22.0 & 23.5 & 19.2 & 18.2 \\
		&& & 2 & 35.5 & 33.2 & 39.7 & 38.9 & 36.8 & 38.7 & 38.1 & 35.4 & 33.3 \\ 
		&& & 4 & 57.3 & 55.9 & 63.7 & 61.2 & 60.0 & 62.1 & 62.7 & 60.8 & 57.8 \\
		& $\delta = 2$ & (8$\%$,48$\%$) & 1 & 70.5 & 66.1 & 70.0 & 72.3 & 68.5 & 69.0 & 73.5 & 68.2 & 66.7 \\ 
		&& & 2 & 90.4 & 89.2 & 92.0 & 92.3 & 91.5 & 92.0 & 93.7 & 92.6 & 91.6 \\ 
		&& & 4 & 99.7 & 99.5 & 99.8 & 99.8 & 99.7 & 99.8 & 99.9 & 99.9 & 99.9 \\ 
		Unif (eq) & $\delta =1$ & (29$\%$,48$\%$) & 1 & 24.5 & 20.3 & 21.4 & 24.5 & 20.5 & 20.2 & 23.8 & 19.0 & 18.2 \\ 
		&& & 2 & 36.1 & 33.7 & 35.4 & 37.6 & 35.2 & 34.9 & 37.1 & 34.3 & 31.8 \\
		&& & 4 & 57.9 & 56.5 & 59.4 & 61.2 & 60.0 & 59.7 & 61.4 & 60.1 & 56.7 \\ 
		& $\delta =2$ & (29$\%$,67$\%$) & 1 & 69.6 & 64.5 & 65.2 & 73.0 & 68.3 & 67.8 & 69.4 & 63.6 & 64.1 \\ 
		&& & 2 & 91.6 & 90.7 & 91.0 & 92.8 & 91.9 & 91.8 & 92.5 & 91.5 & 90.8 \\ 
		&& & 4 & 99.6 & 99.5 & 99.7 & 99.9 & 99.9 & 99.8 & 99.8 & 99.8 & 99.8 \\ 
		Weib (eq) & $\delta =1$ & (13$\%$,42$\%$) & 1 & 26.5 & 22.9 & 24.3 & 25.8 & 22.3 & 22.2 & 26.5 & 22.5 & 21.4 \\ 
		&&& 2 & 40.4 & 38.0 & 40.5 & 42.4 & 40.6 & 40.2 & 42.1 & 39.4 & 36.8 \\
		&& & 4 & 63.7 & 62.6 & 65.7 & 67.9 & 66.8 & 66.5 & 68.2 & 67.1 & 63.4 \\ 
		& $\delta =2$ & (13$\%$,65$\%$) & 1 & 74.9 & 71.2 & 72.9 & 77.7 & 74.1 & 73.6 & 75.8 & 71.7 & 71.7 \\ 
		&& & 2 & 94.3 & 93.4 & 94.5 & 96.0 & 95.6 & 95.6 & 95.3 & 94.7 & 94.1 \\
		&& & 4 & 99.8 & 99.8 & 99.8 & 100.0 & 100.0 & 100.0 & 99.9 & 99.9 & 99.8 \\ 
		\\
		\multicolumn{13}{c}{\em S7: Weibull vs. piece-wise Exponential}\\ [6pt]
		%\hline
		Weib (uneq) & $\delta =1$ & (7$\%$,46$\%$) & 1 & 16.9 & 14.3 & 18.0 & 17.4 & 14.5 & 15.6 & 17.5 & 14.6 & 12.4 \\ 
		&& & 2 & 24.2 & 22.3 & 28.9 & 26.3 & 24.6 & 26.0 & 24.9 & 22.5 & 19.1 \\ 
		&& & 4 & 37.0 & 36.1 & 44.4 & 42.3 & 41.0 & 43.5 & 43.4 & 42.0 & 38.0 \\ 
		& $\delta =1$ & (7$\%$,52$\%$) & 1 & 46.1 & 41.7 & 47.3 & 49.8 & 45.4 & 45.9 & 49.3 & 43.4 & 39.8 \\ 
		&& & 2 & 70.0 & 67.4 & 74.3 & 74.0 & 72.0 & 72.9 & 75.1 & 72.9 & 68.6 \\
		&& & 4 & 92.2 & 91.9 & 94.5 & 95.4 & 95.0 & 95.5 & 95.9 & 95.6 & 94.4 \\ 
		Unif (eq) & $\delta =1$ & (25$\%$,59$\%$) & 1 & 17.4 & 14.3 & 16.1 & 17.2 & 14.4 & 13.9 & 17.5 & 14.0 & 11.8 \\ 
		&& & 2 & 25.1 & 23.2 & 25.9 & 26.0 & 24.0 & 23.4 & 25.8 & 23.5 & 19.5 \\ 
		&& & 4 & 38.2 & 37.1 & 40.9 & 40.5 & 39.6 & 38.7 & 44.3 & 43.0 & 37.5 \\ 
		& $\delta =2$ & (25$\%$,69$\%$) & 1 & 48.6 & 43.6 & 46.0 & 49.1 & 44.7 & 43.5 & 48.6 & 42.9 & 40.3 \\ 
		&& & 2 & 71.9 & 69.4 & 72.6 & 73.3 & 71.2 & 70.1 & 76.0 & 73.8 & 69.6 \\ 
		&& & 4 & 93.2 & 92.9 & 94.1 & 95.5 & 95.1 & 95.0 & 96.0 & 95.5 & 94.2 \\ 
		Weib (eq) & $\delta =1$ & (11$\%$,56$\%$) & 1 & 18.0 & 14.9 & 17.8 & 18.9 & 15.9 & 15.8 & 17.8 & 14.5 & 12.3 \\ 
		&&& 2 & 26.3 & 24.7 & 28.5 & 27.5 & 25.1 & 25.1 & 28.0 & 25.8 & 21.9 \\
		&& & 4 & 41.5 & 40.3 & 45.7 & 44.1 & 43.2 & 42.9 & 46.3 & 45.1 & 39.5 \\ 
		& $\delta =2$ & (11$\%$,67$\%$) & 1 & 50.7 & 45.9 & 50.1 & 51.7 & 47.5 & 47.3 & 53.4 & 48.1 & 44.7 \\
		&& & 2 & 71.9 & 70.1 & 74.1 & 78.3 & 76.7 & 76.6 & 79.1 & 76.8 & 73.2 \\
		&& & 4 & 95.0 & 94.8 & 95.8 & 96.1 & 95.8 & 95.9 & 96.7 & 96.5 & 95.6 \\ 
		%\hline
	
		\bottomrule
	\end{tabular}
\end{table}

\begin{figure}
	\centering
	\includegraphics[width=\textwidth]{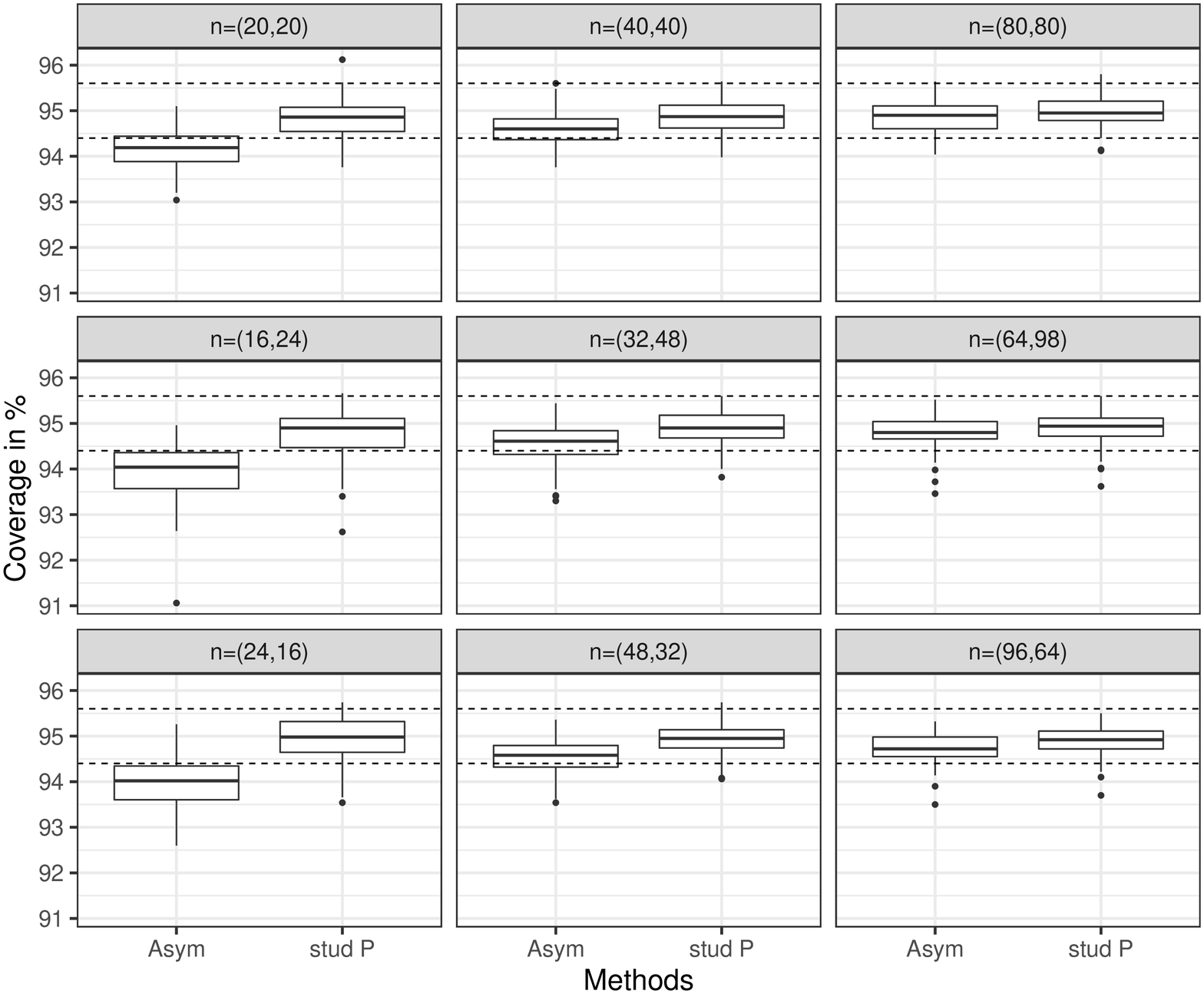}
	\caption{Coverage in $\%$ (nominal level $\alpha=5\%$) of the confidence intervals for the ratio $\mu_1/\mu_2$ based on the asymptotic approximation (Asym) and the  studentized permutation approach (stud P). The dashed, horizontal lines represent the binomial 95$\%$-confidence interval $[94.4\%,95.6\%]$} \label{fig:CI_rat}
\end{figure}

\clearpage

\section{Counting process notation}
For the proofs, we adopt the counting process notation of \cite{abgk}. Let $N_i(t)=\sum_{j=1}^{n_i}\delta_{ij}\mathbf{1}\{X_{ij}\leq t\}$ be the number of observed events up until $t$ in group $i=1,2$ and $Y_i(t)=\sum_{j=1}^{n_i}\mathbf{1}\{X_{ij}\geq t\}$ denotes the number of individuals under risk just before $t$ in group $i=1,2$. Moreover, let $N=N_1+N_2$ and $Y=Y_1+Y_2$ be the respective versions for the pooled sample. It is easy to check that
\begin{align}\label{eqn:Yi=SiGi}
\widehat S_{i-}(t)\widehat G_{i-}(t) = \frac{1}{n_i}\sum_{j=1}^{n_i}\mathbf{1}\{X_{ij}\geq t\}= \frac{1}{n_i} Y_i(t).
\end{align}
Given the counting process notation, we can write the Kaplan--Meier and Nelson--Aalen estimators as follows
\begin{align*}
\widehat S_i(t)=\prod_{k:t_{ik}\leq t}\Bigl( 1- \frac{\Delta N_i(t_{ik})}{Y_i(t_{ik})} \Bigr), \quad \widehat A_{i}(t)=\sum_{k:t_{ik}\leq t} \frac{\Delta N_i(t_{ik})}{Y_i(t_{ik})}\quad (i=1,2;\,t\geq 0),
\end{align*}
where $\Delta N_i(t) = N_i(t) - N_{i-}(t)$ is the increment of $N_i$ in $t$ and $t_{i1},t_{i2},\ldots$ are the distinctive time points within the observed times $(X_{ij})_j$ of group $i$. Moreover, we introduce their pooled counterparts:
\begin{align*}
\widehat S(t)=\prod_{k:t_k\leq t}\Bigl( 1- \frac{\Delta N(t_k)}{Y(t_k)} \Bigr), \quad \widehat A(t)=\sum_{k:{t_k}\leq t} \frac{\Delta N(t_k)}{Y(t_k)}\quad (t\geq 0),
\end{align*}
where $t_1,\ldots,t_d$ are the distinctive time points within the pooled observation times $\mathbf{X}$.

\section{Proof of \eqref{eqn:asym_diff} and \eqref{eqn:asym_ratio}}

The convergence in \eqref{eqn:asym_diff} and \eqref{eqn:asym_ratio} directly follow from the continuous mapping theorem, the $\delta$-method and the following Proposition.
\begin{prop}\label{prop:uncond}
	As $n\rightarrow \infty$,
	$%\begin{align}
	\sqrt{n}( \widehat{\mu}_i -  \mu_i) \stackrel d \longrightarrow Z_i \sim \mathcal{N}(0, \sigma_i^2)
	$ %\end{align}
	with variance
	\begin{align}\label{eqn:def_sigmai}
		\sigma^2_i =  \kappa_i^{-1} \int_0^\tau \Big(\int_x^\tau S_i(t) \,\mathrm{ d }t\Big)^2 \frac{1}{(1-\Delta A_i(x))G_{i-}(x)S_{i-}(x)} \;\mathrm{ d }A_i(x).
	\end{align}
\end{prop}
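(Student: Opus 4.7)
The plan is to apply the functional delta method to a composition of three Hadamard-differentiable maps, working entirely within empirical process theory so that ties are handled transparently via the $(1 - \Delta A_i)$ factor that appears in the variance.

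First I would recast the Kaplan--Meier estimator as a functional of simple empirical subdistribution functions. Set
\begin{align*}
\widehat H_i^{(1)}(t) = \frac{1}{n_i}\sum_{j=1}^{n_i} \delta_{ij}\mathbf{1}\{X_{ij}\leq t\}, \qquad \widehat H_i^{(0)}(t) = \frac{1}{n_i}\sum_{j=1}^{n_i} (1-\delta_{ij})\mathbf{1}\{X_{ij}\leq t\},
\end{align*}
with population analogues $H_i^{(1)}, H_i^{(0)}$, and let $\overline H_i(t-) = 1 - H_i^{(1)}(t-) - H_i^{(0)}(t-) = G_{i-}(t)S_{i-}(t)$. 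Since the indicator classes are Donsker, the bivariate empirical process $\sqrt{n_i}(\widehat H_i^{(1)} - H_i^{(1)},\widehat H_i^{(0)} - H_i^{(0)})$ converges weakly in $D[0,\tau]^2$ to a Gaussian process $\mathbb{G}_i$ with the usual empirical covariance structure.

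Next I would chain together three Hadamard-differentiable functionals, all justified in van der Vaart and Wellner (1996) without any continuity assumption on $S_i$ or $G_i$: (i) the Nelson--Aalen map $(H^{(1)},H^{(0)})\mapsto A$ defined by $A(t) = \int_0^t \overline H(s-)^{-1}\,\mathrm{d}H^{(1)}(s)$, which is Hadamard differentiable tangentially to $C[0,\tau]^2$ on the set where $\overline H$ is bounded away from zero (guaranteed here by the standing assumption $P(X_{i1}\geq \tau)>0$); (ii) the product-integral map $A\mapsto S = \prodi_{(0,\cdot]}(1-\mathrm{d}A)$, whose Hadamard derivative in direction $h$ is $-S(t)\int_0^t (1-\Delta A(s))^{-1}\,\mathrm{d}h(s)$ (Gill--Johansen); and (iii) the continuous linear map $S\mapsto \int_0^\tau S(t)\,\mathrm{d}t$. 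Composing by the chain rule and applying the functional delta method gives
\begin{align*}
\sqrt{n_i}(\widehat\mu_i - \mu_i) \stackrel{d}{\longrightarrow} -\int_0^\tau S_i(t) \int_0^t \frac{1}{1-\Delta A_i(s)}\,\mathrm{d}W_i(s)\,\mathrm{d}t,
\end{align*}
where $W_i$ is the Gaussian process obtained by applying the derivative of the Nelson--Aalen map to $\mathbb{G}_i$. Swapping the order of integration via Fubini converts this to a single weighted integral against $\mathrm{d}W_i$, whose integrand is $\int_x^\tau S_i(t)\,\mathrm{d}t \cdot (1-\Delta A_i(x))^{-1}$.

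Finally I would read off the variance. The process $W_i$ has independent increments in the continuous part and the standard covariance identity for the Nelson--Aalen limit yields
\begin{align*}
\mathrm{Var}(\mathrm{d}W_i(x)) = \frac{1}{\overline H_i(x-)}\,\mathrm{d}A_i(x) = \frac{1}{G_{i-}(x)S_{i-}(x)}\,\mathrm{d}A_i(x).
\end{align*}
Squaring the integrand and integrating gives exactly the right-hand side of \eqref{eqn:def_sigmai} up to the factor $\kappa_i^{-1}$, which appears after rescaling $\sqrt{n_i}$ to $\sqrt{n}$ via $n_i/n\to\kappa_i$. The main technical obstacle, and the reason the martingale/Rebolledo route is sidestepped here, is to justify Hadamard differentiability of the Nelson--Aalen and product-integral maps in the tied regime; the $(1-\Delta A_i(x))^{-1}$ factor in the variance arises precisely from the derivative of the product integral at a discontinuous $A_i$, so care is needed to verify this on the exact class of perturbations produced by the Donsker limit of $(\widehat H_i^{(1)},\widehat H_i^{(0)})$. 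The analogous statement \eqref{eqn:asym_ratio} then follows by a standard delta-method step applied to $(\widehat\mu_1,\widehat\mu_2)\mapsto \log\widehat\mu_1 - \log\widehat\mu_2$.
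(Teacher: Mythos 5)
Your route is, at bottom, the same one the paper takes: the paper's proof simply cites Example 3.9.31 of van der Vaart and Wellner for the weak convergence $\sqrt{n_i}(\widehat S_i - S_i)\Rightarrow \mathbb{G}_i$ in $\mathbb D$ (that example is itself established by exactly the chain you describe, Donsker for the subdistribution functions composed with the Hadamard-differentiable Nelson--Aalen and product-integral maps), and then applies the continuous mapping theorem to $S\mapsto\int_0^\tau S(t)\,\mathrm{d}t$ and Fubini to read off the variance. Unpacking that citation is legitimate, and your derivative of the product integral, $-S(t)\int_0^t(1-\Delta A(s))^{-1}\,\mathrm{d}h(s)$, as well as the interchange of the order of integration, are both correct.

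There is, however, a concrete error in your final variance computation, and it sits exactly at the tied-data bookkeeping you yourself flag as the delicate point. The limit $W_i$ of $\sqrt{n_i}(\widehat A_i-A_i)$ is a Gaussian martingale whose variance function in the presence of ties is
\begin{align*}
\int_0^t \frac{1-\Delta A_i(s)}{G_{i-}(s)S_{i-}(s)}\,\mathrm{d}A_i(s),
\end{align*}
not $\int_0^t \{G_{i-}(s)S_{i-}(s)\}^{-1}\,\mathrm{d}A_i(s)$ as you wrote; the factor $1-\Delta A_i$ in the numerator is the discrete-time correction to the Nelson--Aalen limit (see the counting-process literature or the corresponding example in van der Vaart and Wellner). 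With your formula, squaring the integrand $\bigl(\int_x^\tau S_i(t)\,\mathrm{d}t\bigr)(1-\Delta A_i(x))^{-1}$ and integrating produces $(1-\Delta A_i(x))^{-2}$ in the resulting expression rather than the $(1-\Delta A_i(x))^{-1}$ appearing in \eqref{eqn:def_sigmai}, so the computation as written does not reproduce the stated variance. With the correct variance of $W_i$ the two occurrences of $1-\Delta A_i$ partially cancel, $(1-\Delta A_i)^{-2}\cdot(1-\Delta A_i)=(1-\Delta A_i)^{-1}$, and you land on \eqref{eqn:def_sigmai}. Everything else in the argument stands once this is repaired.
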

\begin{proof}[Proof of Proposition \ref{prop:uncond}]
	Let $\mathbb{D}$ be the Skorohod space consisting of all c\`{a}dl\`{a}g functions on $[0,\tau]$.  By Example 3.9.31 of \cite{vaartWellner1996}
	\begin{align}\label{eqn:process_conv_F_uncon}
	\sqrt{n_i}(\widehat S_{i} - S_i) \overset{d}{\to} \mathbb{G}_i \text{ on }\mathbb D
	\end{align}
	for a centered Gaussian process $\mathbb{G}_i$ with covariance structure
	\begin{align*}
	(s,t)\mapsto S_i(t)S_i(s) \int_0^{\min(s,t)} \frac{1}{(1-\Delta A_i(x))G_{i-}(x)S_{i-}(x)}\mathrm{ d }A_i(x).
	\end{align*}
	Thus, we can deduce from \eqref{eqn:process_conv_F_uncon} and the continuous mapping theorem
	\begin{align*}
	\sqrt{n}( \widehat{\mu}_i -  \mu_i ) =  \sqrt{\frac{n}{n_i}}   \int_0^\tau \sqrt{n_i}(\widehat S_i(t) - S_i(t)) \,\mathrm{ d }t  \overset{ d}{\longrightarrow}   \kappa_i^{-1/2}\int_0^\tau \mathbb{G}_i(t)\,\mathrm{ d }t =Z_i.
	\end{align*}
	By Fubini's Theorem \citep[Sec. 3.9.2]{vaartWellner1996}, $Z_i$ is indeed centered normally distributed with variance given by
	\begin{align*}
	\sigma_i^2 &= \kappa_i^{-1}\int_0^{\tau}\int_0^\tau \E( \mathbb{G}_i(t)\mathbb{G}_i(s) ) \,\mathrm{ d }t\,\mathrm{ d }s \\
	& = \kappa_i^{-1}\int_0^{\tau}\int_0^\tau S_i(t)S_i(s) \int_0^{\min(s,t)} \frac{1}{(1-\Delta A_i(x))G_{i-}(x)S_{i-}(x)}\,\mathrm{ d }A_i(x) \,\mathrm{ d }t \,\mathrm{ d }s \\
	&=  \kappa_i^{-1}\int_0^\tau \Bigl(\int_x^\tau S_i(t) \,\mathrm{ d }t \Bigr)^2 \frac{1}{(1-\Delta A_i(x))G_{i-}(x)S_{i-}(x)}\,\mathrm{ d }A_i(x).
	\end{align*}
\end{proof}

\section{Proof of the variance estimator's consistency}

Define $y_i=S_{i-}G_{i-}$ and $\nu_i$ by $\nu_i(t) = \int_0^t G_{i-}(s) \,\mathrm{ d }F_i(s)$ $(t\geq 0)$. By the Glivenko--Cantelli Theorem
\begin{align}\label{eqn:Yi+Ni}
\sup_{t\in[0,\tau]}|n_i^{-1}Y_i(t) - y_i(t)| + 	\sup_{t\in[0,\tau]}|n_i^{-1}N_i(t) - \nu_i(t)| \to 0 \quad \text{as }n\to\infty
\end{align}
almost surely. It is well known that this combined with the continuous mapping theorem implies the uniform consistency of the Kaplan--Meier and Nelson--Aalen estimators:
\begin{align}\label{eqn:conv_Si}
\sup_{t\in[0,\tau]}|\widehat S_i(t) - S_i(t)| + \sup_{t\in[0,\tau]}|\widehat A_i(t) - A_i(t)| \to 0\quad \text{as }n\to\infty
\end{align}
almost surely. Obviously, it follows that
\begin{align}\label{eqn:intS_conv}
\sup_{x\in[0,\tau]}\Bigl|\int_x^\tau \widehat S_i(t) \,\mathrm{ d }t -\int_x^\tau S_i(t) \,\mathrm{ d }t\Bigr| \to 0\quad \text{as }n\to\infty
\end{align}
almost surely. In particular,
\begin{align}\label{eqn:var_cons_mui}
\widehat \mu_i = \int_0^\tau \widehat S_i(t) \,\mathrm{ d }t \to \int_0^\tau S_i(t) \,\mathrm{ d }t = \mu_i\quad \textrm{ as }n\to\infty
\end{align}
with probability one. Moreover, we can deduce from \eqref{eqn:Yi=SiGi}, \eqref{eqn:Yi+Ni} and \eqref{eqn:intS_conv} that we have almost surely as $n\to \infty$
\begin{align}
\widehat \sigma_i^2 &= \frac{n}{n_i} \int_0^\tau \Big(\int_x^\tau \widehat S_i(t) \,\mathrm{ d }t\Big)^2 \frac{1}{(1-\Delta \widehat A_i(x))n_i^{-1}Y_i(x)} \;\mathrm{ d }\widehat A_i(x) \nonumber \\
&\to \kappa_i^{-1} \int_0^\tau \Big(\int_x^\tau S_i(t) \,\mathrm{ d }t\Big)^2 \frac{1}{(1-\Delta A_i(x))S_{i-}(x)G_{i-}(x)} \;\mathrm{ d } A_i(x) = \sigma^2_i. \label{eqn:var_cons_sigmai}
\end{align}
Clearly, the consistency of $\widehat \sigma^2=\widehat \sigma_1^2+\widehat \sigma_2^2$ follows. In the same way, we can deduce the consistency of $\widehat \sigma^2_{\text{rat}}$ which was defined after Equation \eqref{eqn:asym_ratio}.

\section{Proof of Theorems~\ref{theo:perm_unstud} and \ref{theo:perm}}
We first introduce the limits of $Y/n, N/n$, $\widehat S$ and $\widehat A$:
\begin{align*}
&y(t)=\kappa_1y_1(t)+\kappa_2y_2(t), \quad \nu(t)= \kappa_1\nu_1(t)+ \kappa_2\nu_2(t), \\
&S(t) = \exp\Bigl\{ - \int_0^t \frac{1}{y(s)}\mathrm{d}\nu(s) \Bigr\}, \quad A(t) = \int_0^t \frac{1}{y(s)}\mathrm{d}\nu(s),
\end{align*}
where $y_i(t)=S_{i-}(t)G_{i-}(t)$ and $\nu_i(t) = \int_0^t G_{i-}(s) \,\mathrm{ d }F_i(s)$ were already defined in the proof of Proposition~\ref{prop:uncond}. In fact, from the Glivenko-Cantelli Theorem (and the continuous mapping theorem for the convergence of $\widehat S$) we obtain immediately
\begin{align}\label{eqn:conv_Y+N+S}
\sup_{t\in[0,\tau]} \Bigl | \frac{N(t)}{n} - \nu(t) \Bigr| + \sup_{t\in[0,\tau]} \Bigl | \frac{Y(t)}{n} - y(t) \Bigr| + \sup_{t\in[0,\tau]} \Bigl | \widehat S(t) - S(t) \Bigr| + + \sup_{t\in[0,\tau]} \Bigl | \widehat A(t) - A(t) \Bigr|  \to 0
\end{align}
with probability one as $n\to\infty$. In particular,
\begin{align*}
\widehat \mu = \int_0^\tau \widehat S(t) \,\mathrm{ d } t \to \int_0^\tau S(t) \,\mathrm{ d }t = \mu
\end{align*}
almost surely as $n\to\infty$.

For the first step of the proof, we follow the argumentation of the previous proof of \eqref{eqn:asym_diff}. As explained by \cite{dopa2018} (see Theorem 5 in their supplement), the following conditional convergence is a straightforward consequence of Theorems 3.7.1 and 3.7.2 in \cite{vaartWellner1996}:
\begin{align*} % \label{eqn:conv_Spi}
&\sqrt{n} \Bigl( \widehat S^\pi_{1} - \widehat S,\widehat S^\pi_{2} - \widehat S \Bigr) \overset{ d}{\longrightarrow} \mathbb{G}^\pi\quad \text{on }\mathbb D^2 \textrm{ as }n\to\infty
\end{align*}
given the data in probability, where $\mathbb{G}^\pi=(\mathbb{G}^\pi_1,\mathbb{G}^\pi_2)$ is a centered Gaussian process on $\mathbb D^2$ with covariance structure given by
\begin{align*}
\E(\mathbb{G}^\pi_i(s) \mathbb{G}^\pi_{i'}(t)) = \Big(\frac{1}{\kappa_i}\mbf{1}\{i=i'\} - 1\Big)  S(t)S(s) \int_{0}^{\min(s,t)} \frac{1}{(1-\Delta A(x))y(x)}\mathrm{ d } A(x).
\end{align*}
Consequently, we obtain from the continuous mapping theorem that given the data in probability
\begin{align}\label{eqn:perm_Z}
\sqrt{n}(\widehat\mu_1^\pi - \widehat \mu, \widehat \mu_2^\pi - \widehat \mu) \overset{ d}{\rightarrow} \Bigl(\int_0^\tau \mathbb{G}^\pi_1(s)\,\mathrm{ d }s, \int_0^\tau \mathbb{G}^\pi_2(s) \,\mathrm{ d }s \Bigr) = (Z_1^\pi,Z_2^\pi),
\end{align}
where $(Z_1^\pi,Z_2^\pi)$ is $2-$dimensional, centered normally distributed with covariance structure
\begin{align*}
E(Z_i^\pi Z_{i'}^\pi) = \Big(\frac{1}{\kappa_i}\mbf{1}\{i=i'\} - 1\Big)\sigma^{\pi 2},\quad \sigma^{\pi 2} = \int_0^\tau \Big(\int_x^\tau S(t) \,\mathrm{ d }t\Big)^2 \frac{1}{(1-\Delta A(x))y(x)} \;\mathrm{ d } A(x).
\end{align*}
Applying again the continuous mapping theorem yields that given the data in probability
\begin{align}\label{eqn:perm_diff}
\sqrt{n} (\widehat \mu_1^\pi - \widehat \mu_2^\pi) \overset{ d}{\rightarrow} Z_1^\pi - Z_2^\pi \sim N(0,\sigma^{\pi 2}_{\text{diff}}) \text{ with } \sigma^{2}_{\text{perm}} = \frac{\sigma^{\pi 2}}{\kappa_1\kappa_2} \textrm{ as }n\to\infty.
\end{align}
This proves Theorem~\ref{theo:perm_unstud}.

To verify Theorem~\ref{theo:perm}, it remains to discuss the consistency of the variance estimator. Therefor, we fix the original observations $({\bm X},{\bm \delta})$. Note that $N$, $Y$ and $S$ does not change when permuting the data. Thus, we can treat them all as fixed functions. Moreover, we can assume without a loss of generality that \eqref{eqn:conv_Y+N+S} holds for them.
Following \cite[equation 6.1]{neuhaus:1993}, we can deduce
\begin{align*}
\sup_{t\in[0,\tau]} \Bigl | \frac{Y_i^\pi(t)}{Y(t)} - \kappa_i \Bigr| \overset{p}{\rightarrow} 0 \textrm{ as }n\to\infty.
\end{align*}
Using similar arguments, the statement remains true for $N_i^\pi/N$. Combining both, \eqref{eqn:conv_Y+N+S} and the continuous mapping theorem yields
\begin{align*}
\sup_{t\in[0,\tau]} \Bigl | \frac{Y_i^\pi(t)}{n} - \kappa_iy(t) \Bigr| + \sup_{t\in[0,\tau]} \Bigl | \frac{N_i^\pi(t)}{n} - \kappa_i\nu(t) \Bigr| + \sup_{t\in[0,\tau]} \Bigl | \widehat S_i^\pi(t) - S(t) \Bigr| + \sup_{t\in[0,\tau]} \Bigl | \widehat A_i^\pi(t) - A(t) \Bigr|  \overset{p}{\rightarrow} 0.
\end{align*}
In particular, we obtain
\begin{align*}
\Bigl | \widehat \mu_i^\pi - \mu \Bigr| + \sup_{t\in[0,\tau]} \Bigl | \int_t^\tau\widehat S_i^\pi(s)\,\mathrm{ d }s - \int_t^\tau S(s)\,\mathrm{ d }s \Bigr| \overset{p}{\rightarrow} 0.
\end{align*}
Combining all previous statements yields that as $n\to\infty$
\begin{align*}
\widehat \sigma_i^{\pi 2} &= \frac{n}{n_i} \int_0^\tau \Big(\int_x^\tau \widehat S_i^\pi(t) \,\mathrm{ d }t\Big)^2 \frac{1}{n_i^{-1}Y_i^{\pi}(x)} \;\mathrm{ d }\widehat A_i^\pi(x) \nonumber \\
&\overset{p}{\rightarrow} \kappa_i^{-1}\int_0^\tau \Big(\int_x^\tau S(t) \,\mathrm{ d }t\Big)^2 \frac{1}{y(x)} \;\mathrm{ d } A(x) = \kappa_i^{-1} \sigma^{\pi 2}.
\end{align*}
Finally, the desired convergence of the variance estimator follows, i.e. as $n\to\infty$
\begin{align*}
&\widehat \sigma^{\pi 2} = \widehat\sigma_1^{\pi 2} + \widehat\sigma_2^{\pi 2} \overset{p}{\rightarrow}  \kappa_1^{-1} \sigma^{\pi 2} + \kappa_2^{-1} \sigma^{\pi 2} = \sigma^{2}_{\text{perm}}.
\end{align*}

\section{Proof of Corollary~\ref{cor:tests+confidence_intervals;rat}}
It is sufficient to show that given the data in probability
\begin{align}
\sqrt{n} (1/ \widehat\sigma_{\text{rat}}^\pi)[\log(\widehat \mu_1^\pi) - \log(\widehat \mu_2^\pi)] \overset{ d}{\rightarrow} Z_{\text{rat}}^\pi \sim N(0,1) \textrm{ as }n\to\infty.
\end{align}
The corresponding proof can again be separated into two steps: (1) verification of the asymptotic normality of $\log(\widehat \mu_1^\pi) - \log(\widehat \mu_2^\pi)$ and (2) showing the consistency of the variance estimator $\widehat \sigma_{\text{rat}}^{\pi 2}$. It is easy to see that (1) follows immediately from \eqref{eqn:perm_Z} and the $\delta$-method. Moreover, (2) can be proven in the same way as in the previous proof and, thus, it is omitted here.

\end{document}